\newtheorem{lemma}{Lemma}
\newtheorem{theorem}{Theorem}
\newtheorem{remark}{Remark}
\begin{document}

\title{Wireless Powered Communications with Finite Battery and Finite Blocklength}
\author{
	\IEEEauthorblockN{	Onel L. Alcaraz López, 
		Evelio Martín García Fernández,
		Richard Demo Souza	and 
		Hirley Alves
	}
	\thanks{O. López and H. Alves are with the Centre for Wireless Communications (CWC), University of Oulu, Finland. \{Onel.AlcarazLopez, hirley.alves\}@oulu.fi.}
	\thanks{E.M.G. Fernández is with Federal University of Paraná (UFPR), Curitiba, Brazil. evelio@ufpr.br.} 
	\thanks{R.D. Souza is with Federal University of Santa Catarina (UFSC), Florianópolis, Brazil. richard.demo@ufsc.br.}
	\thanks{This work was supported by CNPq, CAPES, Funda\c{c}\~ao Arauc\'aria (Brazil) and Academy of Finland (grant no 303532), and the Program for Graduate Students from Cooperation Agreements (PEC-PG, of CAPES/CNPq Brazil).}
}	
\maketitle

\begin{abstract}	
	We analyze a wireless communication system with finite block length and finite battery energy, under quasi-static Nakagami-m fading. Wireless energy transfer is carried out in the downlink while information transfer occurs in the uplink. Transmission strategies for scenarios with/without energy accumulation between transmission rounds are characterized in terms of error probability and energy consumption. A power control protocol for the energy accumulation scenario is proposed and results show the enormous impact on improving the system performance, in terms of error probability and energy consumption. The numerical results corroborate the existence and uniqueness of an optimum target error probability, while showing that a relatively small battery could be a limiting factor for some setups, specially when using the energy accumulation strategy. 
\end{abstract}

\begin{IEEEkeywords}
	Finite blocklength communications, wireless energy transfer, finite battery, power control.
\end{IEEEkeywords}

\section{Introduction}\label{Int}
The Internet of Things (IoT) is a recent communication paradigm which promises to bring wireless connectivity to ``...anything that may benefit from being connected...'' \cite{Dahlman.2014}, ranging from tiny static sensors to vehicles and drones. Consequently, coming wireless communication systems will have to support a much larger number of connected devices, including autonomous machines and devices, with applications having stringent requirements on latency and reliability as~\cite{Schulz.2017}: factory automation, with maximum latency around 0.25-10ms and maximum error probability of $10^{-9}$; smart grids (3-20ms, $10^{-6}$), professional audio (2ms, $10^{-6}$), etc. Powering and uninterrupted operation of such potential massive number of IoT nodes is a major challenge. \cite{Zanella.2014}. Energy harvesting (EH) techniques have recently drawn significant attention as a potential solution, and authors in \cite{Shaviv.2016} provide an insightful formula regardless of the type of energy source for the approximate capacity of the EH channel over Additive White Gaussian Noise (AWGN) for large and small battery regimes. Wireless Energy Transfer (WET) is a particularly attractive EH technique because radio-frequency (RF) signals can carry both energy and information, which enables energy constrained nodes to harvest energy and receive information \cite{Varshney.2008,Grover.2010}, allowing to prolong their lifetime almost indefinitely.

The exploitation of WET becomes very attractive specially for IoT scenarios where replacing or recharging batteries require high cost and/or can be inconvenient or hazardous (e.g., in toxic environments), or highly undesirable (e.g., for sensors embedded in building structures or inside the human body) \cite{Zhang.2013}. With further advances in antenna technology and EH circuit designs, WET is believed to become very efficient such that it will be implemented widely in the near future. Indeed, WET techniques are now evolving from theoretical concepts into practical devices for low-power electronic applications \cite{Powercast}. Wireless-powered communication networks (WPCNs), where the wireless terminals are powered only by WET and transmit their information using the harvested energy, have been widely investigated in the last years. The feasibility of WET for low-power cellular applications has been studied using experimental results, which have been summarized in \cite{Lu.2015}.  
A classic multi-user WPCN was investigated in \cite{Ju.2014}, where authors develop a ``harvest-then-transmit'' protocol which allows users to first collect energy from the signals broadcasted
by a single-antenna hybrid access-point (AP) in the downlink and then to use their harvested energy to send independent information to the hybrid AP in the uplink. Diverse strategies have been considered in the recent scientific literature in  order to improve the performance of WPCNs, such as relay-assisted \cite{Krikidis.2012,Gurakan.2012,Nasir.2013,Krikidis.2014,Ding.2014,Moritz.2014,Chen.2015,Nasir.2015,Michalopoulos.2015,Xiong.2015,Li.2016,Gu.2016,Mishra.2017}, Hybrid Automatic Repeat-reQuest (HARQ) \cite{Witt.2014}, and power control \cite{Huang.2013,Liu.2013,Isikman.2016,Gu.2016,Mishra.2017,Li.2016}, mechanisms. Works in \cite{Gu.2016,Mishra.2017,Li.2016} are particularly interesting since they propose energy accumulation strategies so that a wireless-powered relay can efficiently assist a communication link. Specifically, an accumulate-then-forward protocol for a multi-antenna relay is presented in \cite{Li.2016} while the charging/discharging behaviors of the relay battery are modeled as a finite-state Markov chain. The relay battery is modeled similarly in \cite{Gu.2016}, and the authors develop a power splitting-based energy accumulation scheme. Therein, a predefined energy threshold is set so the relay can determine whether it has sufficient energy to perform jointly energy accumulation and information forwarding. Otherwise, all the received signal power will be accumulated at the relay. Finally, a cooperative dilemma at the relay, concerning on whether to transfer its harvested energy to the source or to act as an information relay to the destination, is investigated in \cite{Mishra.2017}. Authors resolve this dilemma by providing insights into the optimal positioning suited for either energy relaying or information transfer.

All the above studies are under ideal assumption of communicating with large enough blocks in order to invoke Shannon theoretic arguments to address error performance. However, as pointed out in \cite{Makki.2016}, important characteristics of WET systems are: i) power consumption of the nodes on the order of $\mu$W; ii) strict requirements on the reliability of the energy supply and of the data transfer; iii) information is conveyed in short packets.
This third requirement is due to intrinsically small data payloads, low-latency requirements, and/or lack of energy resources to support longer transmissions \cite{Khan.2016}. This agrees well with several aforementioned IoT scenarios with stringent latency requirements. Although performance metrics like Shannon capacity, and its extension to nonergodic channels, have been proven useful to design current wireless systems, they are not necessarily appropriate in a short-packet scenario \cite{Durisi.2015}, where a more suitable metric is the maximum achievable rate at a given block length and error probability. This metric has been characterized in \cite{Polyanskiy.2010,Yang.2013} for both AWGN and fading channels. Indeed, recent works in finite-blocklength information theory have shed light on a number of cases where asymptotic results yield inaccurate engineering insights on the design of communication systems once a constraint on the codeword length is imposed, e.g., in fast fading scenarios and low-rate transmissions \cite{Polyanskiy.2011,Kostina.2013,Yang.2012,Durisi.2016,Mary.2015}. Recently, WPCNs under finite blocklength regime have received attention in the scientific community. In \cite{Lopez2.2017} we analyze and optimize a single-hop wireless system with energy transfer in the downlink and information transfer in the uplink, under quasi-static Nakagami-m fading in ultra-reliable communication (URC) scenarios, representative of wireless systems with strict error and latency requirements. The results demonstrate that there is an optimum number of channel uses for both energy and information transfer for a given message length. The impact of a decode-and-forward relay-assisted communication setup is evaluated in \cite{Lopez.2017} in terms of throughput and delay, also in URC scenario.
Achievable channel coding rate and mean delay of a point-to-point EH system with finite blocklength are investigated in \cite{Guo.2016} for an AWGN channel.
 On the other hand, subblock energy-constrained codes are investigated in \cite{Tandon.2016}, and a sufficient condition on the subblock length to avoid energy outage at the receiver is provided. In \cite{Khan.2016}, a node charged by a power beacon attempts to communicate with a receiver over a noisy channel. Authors investigate the impact of the number of channel uses for WET and for wireless information transfer (WIT) on the system performance. 
Also, tight approximations for the outage probability/throughput are given in \cite{Haghifam.2016} for an amplify-and-forward relaying scenario, while retransmission protocols, in both energy and information transmission phases, are implemented in \cite{Makki.2016} to reduce the outage probability compared to open-loop communication.

Moreover, power allocation strategies have been recently investigated to enhance the performance of short packets communication systems. In \cite{Lechner.2011}, the authors investigate the optimal power allocation algorithms for low-density parity-check (LDPC) codes with specific degree distributions using multi-edge-type density evolution error boundaries, while the error probability in delay-limited block-fading channels is analyzed.
A single point-to-point wireless link operating under queuing constraints, in the form of limitations on the buffer violation probabilities, is considered in \cite{Gursoy.2013}. The performance of different transmission strategies (e.g., variable-rate, variable-power, and fixed-rate transmissions) is also studied at finite blocklength regime. Furthermore, the maximum achievable channel coding rate at a given blocklength and error probability, when the codewords are subject to a long-term (e.g., averaged-over-all-codeword) power constraint is investigated in \cite{Yang.2015}, in which power control strategies for both AWGN and fading channels are developed. However, to the best of our knowledge, there are only few papers, e.g., \cite{Huang.2013,Liu.2013,Isikman.2016}, where power allocation strategies are proposed for WPCNs but based on the assumption of infinite blocklength. 
Particularly interesting is the work in \cite{Isikman.2016}, where authors propose a low-complexity solution, called fixed threshold transmission (FTT) scheme, and show that its performance is very close to the optimal. This strategy assumes a transmit power threshold to determine whether transmission takes place or not. If the channel state of the current transmission attempt is of poor quality, then saving energy for future transmission attempts may be a wiser choice. 

This paper aims at WPCN scenarios with short packets, but with several differences with respect to the related literature. The system is composed of a point-to-point communication link under Nakagami-m quasi-static fading, with WET in the downlink and WIT in the uplink, as in many of the related works. However, we analyze the error probability and average energy consumption under a finite battery constraint for scenarios with and without energy accumulation between transmission rounds while taking into account the sensitivity of the energy harvester, which is a parameter of practical interest. In addition, we propose a power control protocol for the scenario with energy accumulation between transmission rounds in order to enhance the system performance in terms of error probability by taking advantage of the channel state information (CSI) at the transmitter side, while at the same time the average energy consumption improves. The proposed strategy could be seen as a variant of the finite-blocklength scenarios of the FTT scheme investigated in \cite{Isikman.2016}. Notice that the infinite blocklength assumption in \cite{Isikman.2016} leads to a non-optimal transmit power threshold in our scenario, as the true required threshold is much higher when communicating with short packets.

The main contributions of this work can be listed as follows: 
\begin{itemize}
	\item Accurate closed-form approximations for the error probability in scenarios where all the energy harvested at each WET phase is used to transmit in the next WIT phase. Here, channels for WET and WIT phases are assumed reciprocal, which is different from the result in \cite{Lopez2.2017} where those channels are independent and only infinite battery setup is considered;
	\item A power control algorithm for scenarios with energy accumulation between transmission rounds. Proposed algorithm takes into account the message blocklength and consequently it can be seen as a more practical implementation of the FTT scheme proposed in \cite{Isikman.2016}. Notice that allowing energy accumulation between transmission rounds is beyond the scope of our previous work in \cite{Lopez2.2017};
	\item An analytical approach is provided that shows how misleading any scheme based on the assumption of infinite blocklength compared to finite blocklength is, which validates our assumptions and modeling;
	\item An analysis of the average energy consumption in addition to the error probability, which is not addressed in \cite{Isikman.2016} nor \cite{Lopez2.2017}, for scenarios with and without energy accumulation between transmission rounds. Saving energy for future transmissions allows to improve the system performance in terms of error probability while reducing the energy consumption. A relatively small battery could be a limiting factor for some setups, and specially when using the energy accumulation strategy which also depends heavily on the chosen target error probability.
	\end{itemize}

Next, Section \ref{system} presents the system model and assumptions. Section~\ref{HT} discusses a scenario without energy accumulation between transmission rounds, while the case with energy accumulation is analyzed in Section~\ref{PC} by proposing a power control protocol. Section~\ref{results} presents the numerical results. Finally, Section \ref{conclusions} concludes the paper.
\newline\textbf{Notation:} $X\sim\Gamma(m,1/m)$ is a normalized gamma distributed random variable with shape factor $m$, Probability Density Function (PDF) $f_X(x)=\frac{m^m}{\Gamma(m)}x^{m-1}e^{-mx}$ and Cumulative Distribution Function (CDF) $F_X(x)=1-\frac{\Gamma(m,mx)}{\Gamma(m)}$. Let $\mathds{E}[\!\ \cdot\ \!]$ denote expectation, $|\cdot|$ is the absolute value operator, and $\mathds{1}(\cdot)$ is an indicator function which is equal to $1$ if its argument is true and $0$ otherwise. Also, $\mathds{P}[A]$ is the probability of event $A$, while $\min(x,y)$ and $\max(x,y)$ are the minimum and maximum values between $x$ and $y$, respectively.

\section{System Model and Assumptions}\label{system}
Consider the point-to-point wireless communication system shown in Fig.~\ref{Fig1}, in which $S$ represents the information source,  $D$ is the destination, and both are single antenna, half-duplex, devices.
$D$ is assumed to be externally powered, while $S$ may be seen as a sensor node with very limited energy supply and finite battery. First, $D$ charges $S$ during $v$ channel uses in the WET phase, and doing that, acts as an interrogator, requesting information from $S$. Then, $S$ transmits $k$ information bits over $n$ channel uses in the WIT phase.
We define a ``transmission round'' as a pair of consecutive WET and WIT phases, in that order.
Notice that $S$ can transmit its data using all the energy available in its battery at the start of each WIT phase (without energy accumulation between transmission rounds) or just make use of a part of that energy, saving the rest for future transmissions (with energy accumulation between transmission rounds). We consider a time-constrained setup, which implies that $D$ has to decode the received signal for each arriving information block.

In addition, channel reciprocity holds as shown in Fig.~\ref{Fig1}, because we consider the same frequency bands for both WET and WIT phases\footnote{The reciprocity principle is based on the property that electromagnetic waves traveling in both directions will undergo the same physical perturbations. Therefore, if the link operates on the same frequency band in both directions, the impulse response of the channel observed between any two antennas should be the same regardless of the direction \cite{Guillaud.2005}. This is a very common assumption in many works related with WPCNs, e.g., \cite{Witt.2014,Zhao.2016,Hadzi.2016,Yang.2015.2}. In practice, the non-symmetric characteristics of the RF electronic circuitry would affect the reciprocity property, and some calibration methods would be required \cite{Guillaud.2005}.}. We assume low-mobility scenarios, for which the coherence time is large enough such that channels are quasi-static, e.g., the fading process is constant over a transmission round ($v+n$ channel uses) and independent and identically distributed from round to round. To support this assumption, let $T_{h}$ be the coherence time, thus $T_h\approx\frac{1}{f_m}=\frac{c}{fv_d}$, 
where $f_m=\tfrac{v_d}{c}f$ is the maximum Doppler spread, $v_d$ is the device velocity, $c$ is the speed of light and $f$ is the transmission frequency. 
%For low mobility scenarios $T_h$ can be considerably large. 
Fig.~\ref{Fig2}a shows the approximate coherence time as a function of the device velocities for several transmission frequencies. For velocities below $20$km/h and for all the frequencies being considered, the coherence time is expected to be above $10$ms. Also, if we fixed the transmission frequency to $2$GHz while selecting low-mobility scenarios, we can see in Fig.~\ref{Fig2}b the coherence time in channel uses as a function of the channel use duration. For low velocities ($v_d\le 3$km/h), the coherence time is always above $1000$ channel uses. 
Therefore, for low mobility scenarios and finite (short) blocklength, the quasi-static assumption holds, as well as the reciprocity of the channels, since it is expected that $v+n\le \frac{T_h}{T_c}$.

The fading is modeled using the Nakagami-m distribution, which is a generalized distribution that can model different fading environments by adjusting its parameters to fit a variety of empirical measurements. In fact, multipath fading can be adequately characterized by the Nakagami-m distribution, and it can model also the Rayleigh and Rician distributions, as well as more general ones \cite{Goldsmith.2005}. 
We consider normalized channel gains, then $g=|h|^2\sim\Gamma(m,1/m)$, while the duration of a channel use is denoted by $T_c$.

In the scenario without energy accumulation between transmission rounds, perfect CSI is assumed only at $D$ when decoding after the WIT phase. For the scenario with energy saving, CSI is also assumed at the transmitter side. Although CSI acquisition in an energy-limited setup is not trivial, our analysis based on perfect CSI gives an upper-bound on the performance of real scenarios, where additional delay and imperfections in channel estimation are present. In addition, notice that CSI at the transmitter side can be acquired via feedback from $D$ or even if $D$ sends pilots taking advantage of channel reciprocity. 
\begin{figure}[t!]
	\centering
	\subfigure{\includegraphics[width=0.4\textwidth]{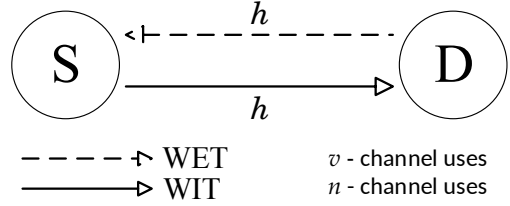}}
	\vspace*{-3mm}
	\caption{System model with WET in the downlink and WIT in the uplink.}		
	\label{Fig1}
	\vspace*{-5mm}
\end{figure}
\begin{figure}[h!]
	\centering
	\includegraphics[width=0.65\textwidth]{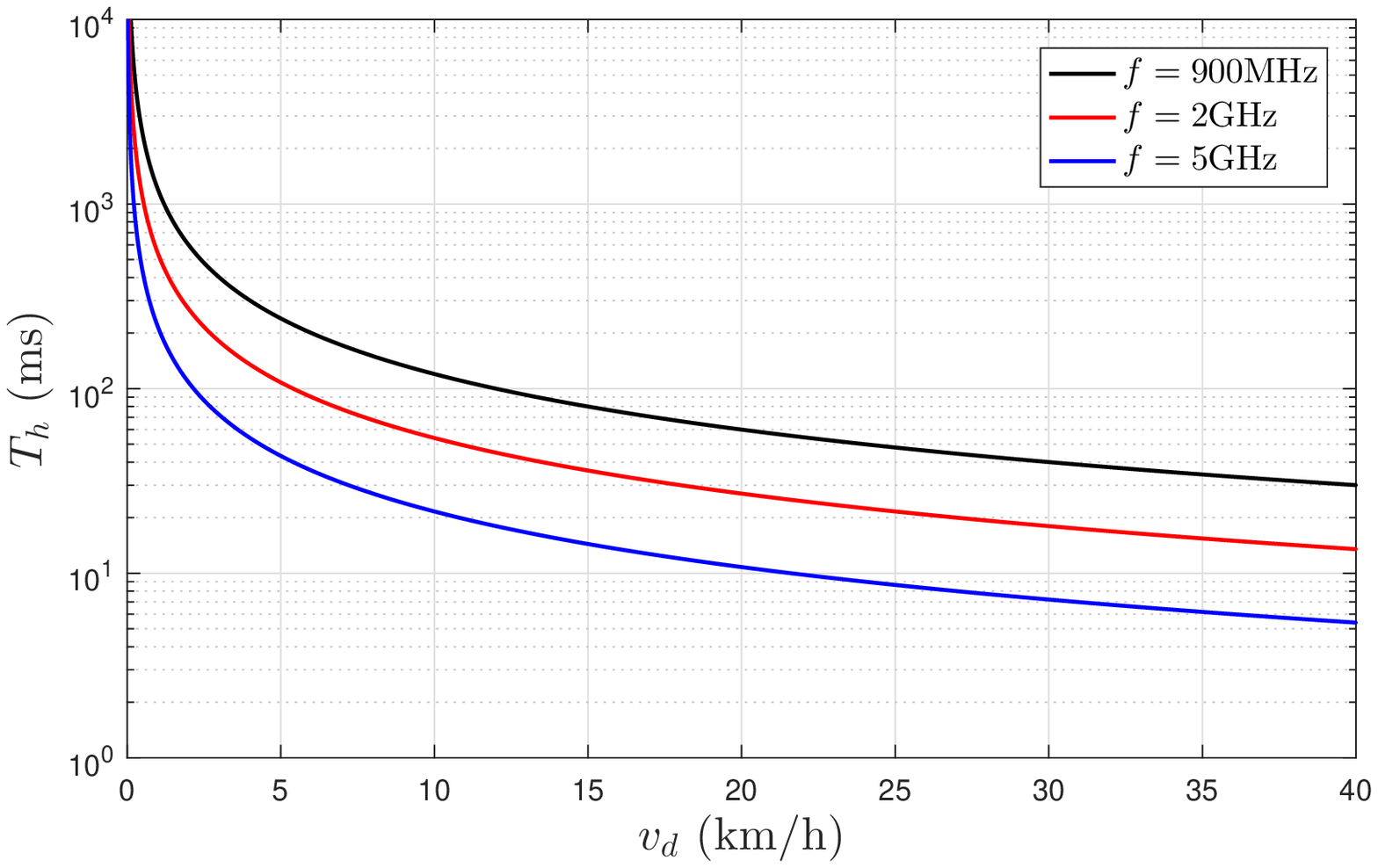}\\
	\vspace{2mm}
	\includegraphics[width=0.65\textwidth]{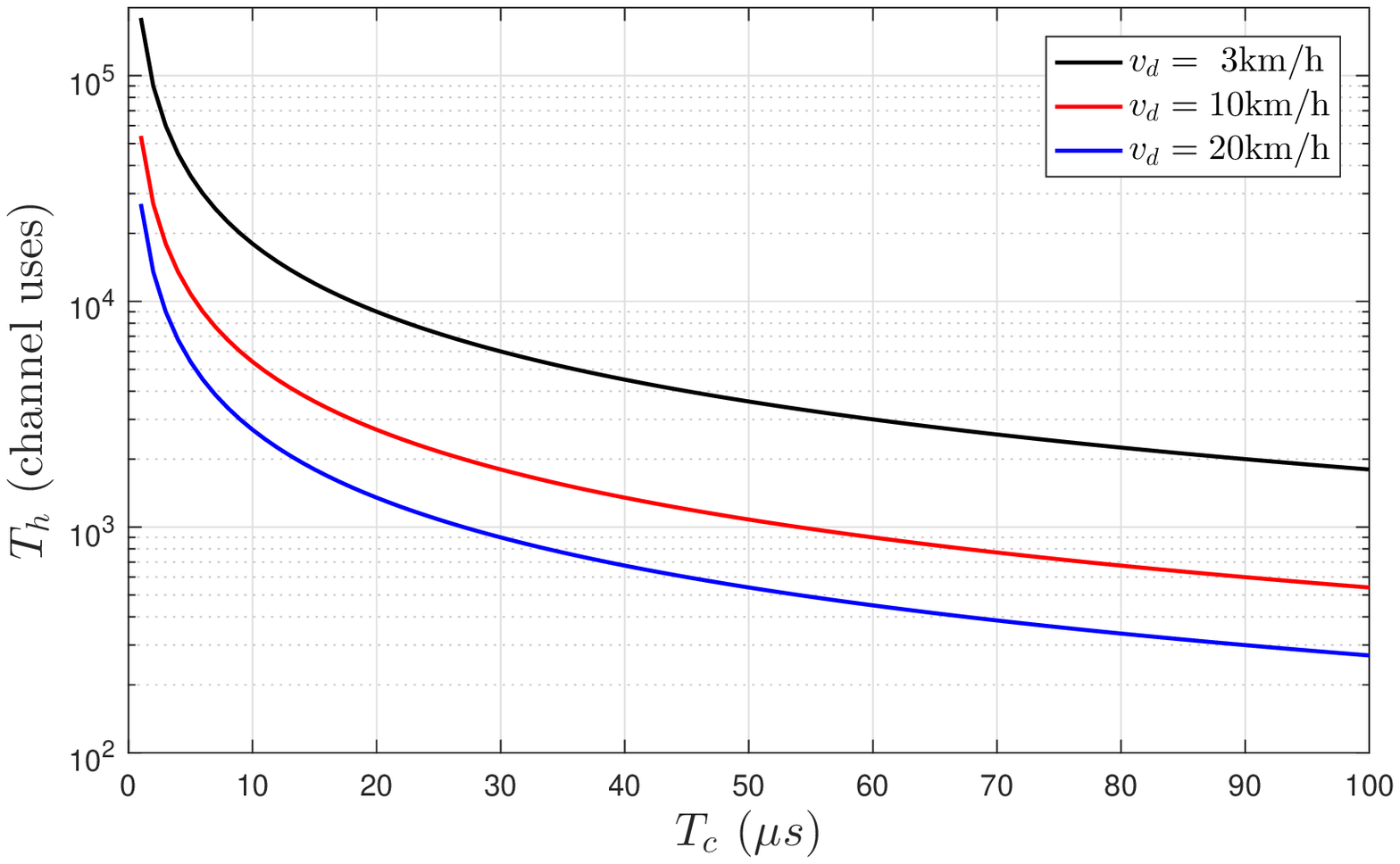}
	\vspace{-3mm}
	\caption{Coherence time a) in ms and as a function of $v_d$ for $f\in\{0.9,2,5\}$GHz (top), and b) in channel uses, and as a function of the duration of a channel use, $T_c$, for $f=2$GHz and $v_d\in\{3,10,20\}$km/h (bottom).}\label{Fig2}		
	\vspace*{-5mm}
\end{figure}
\section{Harvest then Transmit (HTT)}\label{HT}
In this section we analyze the scenario without energy accumulation between transmission rounds.
\subsection{WET Phase}
In this phase, $D$ charges $S$ during $v$ channel uses. The receiving power at $S$ is
\begin{equation}\label{Pr}
P_{r,i}=\frac{P_d |h_i|^2}{\kappa d^{\alpha}}=\frac{P_d g_i}{\kappa d^{\alpha}},
\end{equation}
\noindent where $P_d$ is the transmit power of $D$, $d$ is the distance between $S$ and $D$, $\alpha$ is the path loss exponent and $\kappa$ accounts for other factors as the carrier frequency, heights and gains of the antennas \cite{Goldsmith.2005}. Now, the energy harvested at $S$ during the $i$th transmission round is
\begin{equation}\label{Eh}
E_i=\mathds{1}(P_{r,i}\ge \varpi)\eta P_{r,i}vT_c=\mathds{1}(g_i\ge \varpi^*)\eta P_{r,i}vT_c,
\end{equation}
\noindent where $\varpi$ is the sensitivity of the energy harvester (minimum RF input power required for energy harvesting), therefore $\varpi^*=\frac{\varpi\kappa d^{\alpha}}{P_d}$ is the channel sensitivity threshold, while $0<\eta<1$ is the energy conversion efficiency. The indicator function allows to make $E_i=0$ for any received power below of the sensitivity level. In addition, we assume that $P_d$ is sufficiently large such that the energy harvested from noise is negligible.
Harvested energy is first stored in a rechargeable battery of capacity $B_{_{\mathrm{max}}}$, and becomes available in the current round. Then, the charge of the battery at the beginning of the $i$th WIT phase is updated as follows
\begin{align}
B_i&=\min(B_{\mathrm{max}},E_i)=\min\Big(B_{\mathrm{max}},\mathds{1}(g_i\ge \varpi^*)\frac{\eta P_d}{\kappa d^{\alpha}}vT_c\Big)=\mathds{1}(g_i\ge \varpi^*)\min\Big(B_{\mathrm{max}},\frac{\eta P_d}{\kappa d^{\alpha}}vT_c\Big)\nonumber\\
&=\mathds{1}(g_i\ge \varpi^*)\min(g_i,\lambda)\frac{\eta P_d}{\kappa d^{\alpha}}vT_c,
\end{align}
\noindent where $\lambda=\frac{B_{\mathrm{max}}\kappa d^{\alpha}}{\eta P_d vT_c}$ is the channel power gain threshold for the saturation of the battery in $S$.
\subsection{WIT Phase}
After energy has been harvested during the WET phase, which is only when received power overcomes the sensitivity of the energy harvester, $S$ uses all the energy in its battery to transmit a message of $k$ bits to $D$ over $n$ channel uses. The signal received at $D$ during the $i$th round can be written as
\begin{equation}\label{yd}
\mathbf{y}_{d,i}=\sqrt{\frac{P_{s,i}}{\kappa d^{\alpha}}}h_i\mathbf{x}_{s,i}+\mathbf{w}_{d,i},
\end{equation}
where $\mathbf{x}_s$ belongs to the zero-mean, unit-variance Gaussian codebook transmitted by $S$, $\mathds{E}[|\mathbf{x}_s|^2]=1$,  $\mathbf{w}_d$ is the Gaussian noise vector at $D$ with variance $\sigma_d^2$ and 
\begin{equation}\label{Ps}
P_{s,i}=\frac{B_i}{nT_c}=\frac{\eta vP_d}{n\kappa d^{\alpha}}\min(g_i,\lambda)\mathds{1}(g_i\ge \varpi^*)
\end{equation}
\noindent is the transmit power.
Thus, the instantaneous Signal-to-Noise Ratio (SNR) at $D$ in the $i$th round is
\begin{equation}\label{SNR}
\gamma_i=\frac{P_{s,i}g_i}{\kappa d^{\alpha}\sigma_d^2}=\frac{\eta vP_dg_i}{n\kappa^2d^{2\alpha}\sigma_d^2}\min(g_i,\lambda)\mathds{1}(g_i\ge \varpi^*)=\beta g_i\min(g_i,\lambda)\mathds{1}(g_i\ge \varpi^*),
\end{equation}
\noindent which is proportional to the square of the power channel coefficient as long as the $S$ battery is not saturated, otherwise it only relies on the scaled power channel coefficient, and $\beta=\frac{\eta vP_d}{n\kappa^2d^{2\alpha}\sigma_d^2}$.
\subsection{Error Probability and Average Power Consumption}
The information theoretic analysis for infinite blocklength says that no error occurs as long as $\gamma>2^r-1$ \cite{Goldsmith.2005}. However, if we communicate over a noisy channel and we are restricted to use a finite number of channel uses, then no protocol is able to achieve perfectly reliable communication \cite{Bertsekas.1992}.
Let $\epsilon_i$ be the error probability for the information block transmitted in the $i$th round, which is well approximated by \cite[Eq.(5)]{Hu.2016} 
\begin{equation}\label{e1}
\epsilon_i\approx Q\Biggl(\frac{C(\gamma_i)-r}{\sqrt{V(\gamma_i)/n}}\Biggl),
\end{equation}
where $r = k/n$ is the source fixed transmission rate,  $C(\gamma_i)=\log_2(1+\gamma_i)$ is the Shannon capacity, $V(\gamma_i)=\left(1-\frac{1}{(1+\gamma_i)^2}\right)(\log_2e)^2$ is the channel dispersion, which measures the stochastic variability of the channel relative to a deterministic channel with the same capacity \cite{Polyanskiy.2010}, and $Q(x)=\int_{x}^{\infty}\frac{1}{\sqrt{2\pi}}e^{-t^2/2}\mathrm{d}t$. 
For quasi-static fading channels the error probability is \cite[eq.(59)]{Yang.2013}
\begin{align}
\varepsilon&=\mathds{E}[\epsilon_i]\approx\int\limits_{0}^{\infty}\!Q\Biggl(\!\frac{C(\gamma_i)\!-\!r}{\sqrt{V(\gamma_i)/n}}\!\Biggl)\!f_G(g)\mathrm{d}g\nonumber\\
&\!\stackrel{(a)}{=}\underbrace{F_G(\varpi^*)}_{\varepsilon_1}+\!\underbrace{\int\limits_{\min(\varpi^*,\lambda)}^{\lambda}\!Q\Biggl(\!\frac{C(\beta g^2)\!-\!r}{\sqrt{\frac{V(\beta g^2)}{n}}}\!\Biggl)\!f_G(g)\mathrm{d}g\!+\!\int\limits_{\max(\varpi^*,\lambda)}^{\infty}\!Q\Biggl(\!\frac{C(\beta\lambda g)\!-\!r}{\sqrt{\frac{V(\beta\lambda g)}{n}}}\!\Biggl)\!f_G(g)\mathrm{d}g}_{\varepsilon_2}, \label{EX}
\end{align}
where $(a)$ comes from using \eqref{SNR}. Notice that $\varepsilon_1=F_G(\varpi^*)=1-\tfrac{\Gamma(m,m\varpi^*)}{\Gamma(m)}$, accounts for situations where the power transfer is unsuccessful because of the sensitivity of the energy harvester, while $\varepsilon_2$ is the error probability when communicating. Both, \eqref{e1} and $\varepsilon_2$ in \eqref{EX}, are accurate when considering blocklength $n\ge 100$ as shown in \cite[Figs. 12 and 13]{Polyanskiy.2010} for AWGN, and in \cite{Yang.2014_2} for fading channels, respectively. Notice that it seems intractable to find a closed-form solution for $\varepsilon_2$ in \eqref{EX}. Then, first we resort to the approximation of $Q\big(p(\mu g^t)\big)$, $p(\mu g^t)=\frac{C(\mu g^t)-r}{\sqrt{V(\mu g^t)/n}}$, given by \cite{Makki.2016,Makki.2014}
\begin{align}\label{AP}
Q(p(\mu g^t))\!\approx\!\Omega(\mu g^t)\!=\!\left\{\begin{array}{ll}
\!1,&  g\le \zeta^{\tfrac{2}{t}}\\
\!\frac{1}{2}\!-\!\frac{\phi}{\sqrt{2\pi}}(\mu g^t\!-\!\theta)\!,\!&\!\zeta^{\tfrac{2}{t}}\!<\!g\!<\!\varphi^{\tfrac{2}{t}}\!\\
\!0,& g\ge \varphi^{\tfrac{2}{t}}
\end{array}
\right.\!,
\end{align}
where $\zeta=\sqrt{\tfrac{\varrho}{\mu}}$,  $\varphi =\sqrt{\tfrac{\vartheta}{\mu}}$,  $\theta=2^{r}-1$, $\phi=\sqrt{\frac{n}{2\pi}}(2^{2r}-1)^{-\frac{1}{2}}$, $\varrho=\theta-\frac{1}{\phi}\sqrt{\frac{\pi}{2}}$ and $\vartheta=\theta+\frac{1}{\phi}\sqrt{\frac{\pi}{2}}$, which leads to the following result.
\begin{theorem}\label{prop_1}
	For the system described in Section \ref{HT}, the error probability when communicating, $\varepsilon_2$ in \eqref{EX}, can be approximated as in \eqref{AP2} and \eqref{AP3} for finite and infinite battery devices, respectively, where $\omega_1=\big(\frac{1}{2}+\frac{\phi\theta}{\sqrt{2\pi}}\big)$, $\omega_2=\frac{\phi\beta}{\sqrt{2\pi}}$, $z_{11}=\min(\zeta_1,\lambda)$, $z_{12}=\min(\varphi_1,\lambda)$, $z_{13}=\min(z_{11},\varpi^*)$, $z_{14}=\min\big(\max(z_{11},\varpi^*),z_{12}\big)$, $z_{15}=\min(\zeta_1,\varpi^*)$, $z_{16}=\min(\max(\zeta_1,\varpi^*),\varphi_1)$, $z_{21}=\max(\zeta_2^2,z_{23})$, $z_{22}=\max(\varphi_2^2,z_{23})$, $z_{23}=\max(\lambda,\varpi^*)$, and $\zeta_j=\sqrt{\tfrac{\varrho}{\mu_j}}$, $\varphi_j=\sqrt{\tfrac{\vartheta}{\mu_j}}$, $\mu_j=\beta\lambda^{j-1}$, with $j\in\{1,2\}$.% (on the top of the next page).
	\begin{figure*}[!h]	
		\vspace{-6mm}
		\small
		%\hrule		
		\begin{align} 
		&\varepsilon_2\!\approx\! \frac{1}{\Gamma(m)}\bigg[\Gamma\big(m,mz_{13}\big)\!+\!\Gamma\big(m,mz_{23}\big)\!-\!\Gamma\big(m,mz_{11}\big)\!+\!\omega_1\Big(\Gamma\big(m,mz_{14}\big)\!-\!\Gamma\big(m,mz_{12}\big)\!-\!\Gamma\big(m,mz_{22}\big)\Big)\!+\!\nonumber\\
		&\!+\!\frac{\omega_2}{m^2}\!\Big(\!\Gamma\!\big(m\!+\!2,mz_{12}\big)\!-\!\Gamma\!\big(m\!+\!2,mz_{14}\big)\!\Big)\!+\!(\omega_1\!-\!1)\Gamma\big(m,mz_{21}\big)\!+\!\frac{\omega_2z_{23}}{m}\!\Big(\!\Gamma\!\big(m\!+\!1,mz_{22}\big)\!-\!\Gamma\!\big(m\!+\!1,mz_{21}\big)\!\Big)\!\bigg]\label{AP2}
		\end{align}
		\begin{align} 
		\varepsilon_{_{2,\infty}}&\!\!\approx\!\! \frac{1}{\Gamma(m)}\!\!\bigg[\!\Gamma\big(\!m,mz_{15}\!\big)\!-\!\Gamma\big(\!m,m\zeta_1\!\big)\!+\!\omega_1\!\Big(\!\Gamma\big(\!m,mz_{16}\!\big)\!-\!\Gamma\big(\!m,m\varphi_1\!\big)\!\Big)\!+\!\frac{\omega_2}{m^2}\!\Big(\!\Gamma\big(\!m\!+\!2,m\varphi_1\!\big)\!-\!\Gamma\big(\!m\!+\!2,mz_{16}\!\big)\!\Big)\!\bigg]\label{AP3}
		\end{align}
		%\hrule
	\vspace{-6mm}
	\end{figure*}
\end{theorem}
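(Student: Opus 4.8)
The plan is to replace the $Q$-function inside each of the two integrals of $\varepsilon_2$ in \eqref{EX} by the piecewise-linear surrogate $\Omega$ of \eqref{AP}, and then integrate the resulting elementary pieces against the gamma PDF $f_G$ in closed form. Write $\varepsilon_2=I_1+I_2$, with $I_1=\int_{\min(\varpi^*,\lambda)}^{\lambda}Q(p(\beta g^2))f_G(g)\mathrm{d}g$ the unsaturated-battery contribution (SNR $\propto g^2$, so $t=2$, $\mu=\mu_1=\beta$) and $I_2=\int_{\max(\varpi^*,\lambda)}^{\infty}Q(p(\beta\lambda g))f_G(g)\mathrm{d}g$ the saturated-battery contribution (SNR $\propto g$, so $t=1$, $\mu=\mu_2=\beta\lambda$). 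Under \eqref{AP}, the integrand of $I_1$ equals $1$ for $g\le\zeta_1$, the affine map $\omega_1-\omega_2 g^2$ for $\zeta_1<g<\varphi_1$, and $0$ for $g\ge\varphi_1$; the integrand of $I_2$ equals $1$ for $g\le\zeta_2^2$, the affine map $\omega_1-\omega_2\lambda g$ for $\zeta_2^2<g<\varphi_2^2$, and $0$ beyond. This is precisely why the two regimes $\mu_j=\beta\lambda^{j-1}$ and the breakpoints $\zeta_j,\varphi_j$ enter.

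Next I would reduce every surviving piece to upper incomplete gamma functions through the single moment identity
\begin{equation}
\int_a^b g^{\,l}f_G(g)\,\mathrm{d}g=\frac{1}{m^{l}\Gamma(m)}\Big(\Gamma(m+l,ma)-\Gamma(m+l,mb)\Big),\qquad l\in\{0,1,2\},
\end{equation}
which follows from $f_G(g)=\tfrac{m^m}{\Gamma(m)}g^{m-1}e^{-mg}$ by the substitution $u=mg$. The $\Omega\equiv1$ segment of each integral contributes a pure $l=0$ (CDF) term; the affine segment of $I_1$ contributes an $l=0$ term weighted by $\omega_1$ and an $l=2$ term weighted by $\omega_2$, accounting for the $\tfrac{\omega_2}{m^2}$ prefactor and the $\Gamma(m+2,\cdot)$ arguments; the affine segment of $I_2$ contributes an $l=0$ term and an $l=1$ term with slope $\omega_2\lambda$, matching the $\Gamma(m+1,\cdot)$ group and its prefactor in \eqref{AP2}; the $\Omega\equiv0$ segments drop out.

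The delicate part is the bookkeeping of the limits of integration. The breakpoints $\zeta_1,\varphi_1$ (resp.\ $\zeta_2^2,\varphi_2^2$) must be intersected with the original ranges $[\min(\varpi^*,\lambda),\lambda]$ (resp.\ $[\max(\varpi^*,\lambda),\infty)$), and depending on how they interleave, the constant and affine sub-intervals get truncated from either end. I would handle every ordering simultaneously by clamping each endpoint with nested $\min/\max$ operations, which is exactly what the constants $z_{1i},z_{2i}$ encode: $z_{11}=\min(\zeta_1,\lambda)$ and $z_{12}=\min(\varphi_1,\lambda)$ cap the unsaturated breakpoints at the saturation level $\lambda$, while $z_{13},z_{14}$ and $z_{21},z_{22}$ additionally clamp against the harvesting threshold $\varpi^*$ and against the start of each range, so that, e.g., an empty sub-interval produces a vanishing gamma-function difference. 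Verifying that the compact expression \eqref{AP2} is correct under every relative ordering of $\{\varpi^*,\lambda,\zeta_1,\varphi_1,\zeta_2^2,\varphi_2^2\}$ is routine but error-prone, and I expect this case analysis to be the principal obstacle.

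Finally, the infinite-battery formula \eqref{AP3} is the limit $\lambda\to\infty$ of the above: saturation never occurs, so $I_2\to0$ and the upper limit of $I_1$ becomes $\infty$. Then $z_{11}\to\zeta_1$, $z_{12}\to\varphi_1$, and the $\lambda$-clamps disappear, leaving only the $\varpi^*$-clamps $z_{13}\to z_{15}=\min(\zeta_1,\varpi^*)$ and $z_{14}\to z_{16}=\min(\max(\zeta_1,\varpi^*),\varphi_1)$; this collapses \eqref{AP2} to the three gamma-function groups of \eqref{AP3}. I would therefore present \eqref{AP3} as the limiting corollary of the derivation of \eqref{AP2} rather than redo the integration.
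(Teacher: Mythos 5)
Your proposal is correct and follows essentially the same route as the paper's own proof in Appendix~\ref{App_A}: the same split $\varepsilon_2\approx I_1+I_2$ with $(\mu_1,t)=(\beta,2)$ and $(\mu_2,t)=(\beta\lambda,1)$, the same substitution of the piecewise surrogate $\Omega$ from \eqref{AP}, reduction of each piece to upper incomplete gamma functions (your moment identity for $l\in\{0,1,2\}$ is just a unified statement of the paper's CDF and \cite[eq.(8.2.1)]{Frank.2010} manipulations), with the clamped limits $z_{1i},z_{2i}$ encoding the interval intersections, and \eqref{AP3} obtained as the $\lambda\to\infty$ limit with $z_{11}\to\zeta_1$, $z_{12}\to\varphi_1$, $z_{13}\to z_{15}$, $z_{14}\to z_{16}$. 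Nothing essential is missing.
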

\begin{proof}
	See Appendix~\ref{App_A}.
		\phantom\qedhere
\end{proof}
\begin{remark}
	Differently from \cite[eq.(10)]{Lopez2.2017}, where battery is assumed infinite and the WET and WIT channels are considered independent, results in Theorem~\ref{prop_1} hold for both, finite and infinite battery, and considering reciprocal channels and the sensitivity of the energy harvester at the receiver.
\end{remark}

In order to mathematically characterize the energy consumption under the HTT protocol operation, we state the following theorem.
\begin{theorem}\label{prop_2}
	The average transmit power of node $S$ when using the Harvest then Transmit protocol is given by
	\begin{align}
	\bar{P}&=\frac{\eta vP_d}{n\kappa d^{\alpha}}\bigg[\frac{\Gamma(m+1,m\varpi^*)-\Gamma(m+1,m\tau)}{\Gamma(m+1)}+\lambda\frac{\Gamma(m,m\tau)}{\Gamma(m)}\bigg],\label{Pfin}\\
	\bar{P}_{_{\infty}}&=\frac{\Gamma(m+1,m\varpi^*)}{\Gamma(m+1)}\frac{\eta vP_d}{n\kappa d^{\alpha}},\label{Pinf}
	\end{align}
	for finite and infinite battery devices, respectively, and $\tau=\max(\varpi^*,\lambda)$.
\end{theorem}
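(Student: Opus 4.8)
The plan is to start from the instantaneous transmit power in \eqref{Ps} and average it over the fading gain $g\sim\Gamma(m,1/m)$. Since the prefactor in $P_{s,i}=\frac{\eta vP_d}{n\kappa d^{\alpha}}\min(g_i,\lambda)\mathds{1}(g_i\ge\varpi^*)$ is deterministic, it pulls out of the expectation and the entire problem collapses to evaluating a single integral against the gamma PDF:
\[
\bar{P}=\frac{\eta vP_d}{n\kappa d^{\alpha}}\,\mathds{E}\big[\min(g,\lambda)\mathds{1}(g\ge\varpi^*)\big]=\frac{\eta vP_d}{n\kappa d^{\alpha}}\int_{\varpi^*}^{\infty}\min(g,\lambda)\,f_G(g)\,\mathrm{d}g.
\]

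The key observation is that the integrand switches form at $g=\lambda$, while the lower limit $\varpi^*$ may sit on either side of $\lambda$. I would introduce $\tau=\max(\varpi^*,\lambda)$ precisely to absorb both orderings simultaneously: on $[\varpi^*,\tau]$ one has $\min(g,\lambda)=g$, and on $[\tau,\infty)$ one has $\min(g,\lambda)=\lambda$. When $\varpi^*>\lambda$ the first interval becomes empty and the decomposition degenerates correctly, so a single split
\[
\int_{\varpi^*}^{\infty}\min(g,\lambda)f_G(g)\,\mathrm{d}g=\int_{\varpi^*}^{\tau}g\,f_G(g)\,\mathrm{d}g+\lambda\int_{\tau}^{\infty}f_G(g)\,\mathrm{d}g
\]
covers every case at once. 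This $\tau$-bookkeeping is the only genuinely delicate part of the argument; once it is in place, the rest is mechanical.

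Next I would evaluate the two pieces by the substitution $u=mg$, which normalizes the gamma density into the standard $u^{s-1}e^{-u}$ kernel and directly produces upper incomplete gamma functions. The first piece becomes $\frac{1}{m\Gamma(m)}\int_{m\varpi^*}^{m\tau}u^{m}e^{-u}\,\mathrm{d}u$, and using $m\Gamma(m)=\Gamma(m+1)$ this equals $\frac{\Gamma(m+1,m\varpi^*)-\Gamma(m+1,m\tau)}{\Gamma(m+1)}$; the second piece is immediately $\lambda\,\frac{\Gamma(m,m\tau)}{\Gamma(m)}$. Re-attaching the prefactor and summing yields \eqref{Pfin}.

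Finally, for the infinite-battery device I would take $B_{\mathrm{max}}\to\infty$, hence $\lambda\to\infty$ and $\tau=\lambda\to\infty$, so that $\min(g,\lambda)=g$ almost surely. Then $\Gamma(m+1,m\tau)\to 0$, and the saturation term $\lambda\,\Gamma(m,m\lambda)/\Gamma(m)\to 0$ because the incomplete gamma decays like $(m\lambda)^{m-1}e^{-m\lambda}$, which dominates the linear factor $\lambda$. What survives is $\frac{\eta vP_d}{n\kappa d^{\alpha}}\frac{\Gamma(m+1,m\varpi^*)}{\Gamma(m+1)}$, giving \eqref{Pinf}. Equivalently, the infinite-battery result follows in one line by setting $\min(g,\lambda)=g$ from the outset and integrating $\int_{\varpi^*}^{\infty}g\,f_G(g)\,\mathrm{d}g$ directly, which serves as a useful consistency check on the finite-battery expression.
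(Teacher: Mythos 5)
Your proof is correct and follows essentially the same route as the paper's: averaging $P_{s,i}$ from \eqref{Ps} over the gamma density, splitting the integral at $\tau=\max(\varpi^*,\lambda)$ into a linear-in-$g$ piece on $[\varpi^*,\tau]$ and a saturation piece on $[\tau,\infty)$, and expressing both via upper incomplete gamma functions using $m\Gamma(m)=\Gamma(m+1)$. The only cosmetic difference is that you obtain \eqref{Pinf} as the $\lambda\to\infty$ limit of \eqref{Pfin} (correctly noting that $\lambda\,\Gamma(m,m\lambda)\to 0$), whereas the paper re-derives it by integrating $\int_{\varpi^*}^{\infty} g f_G(g)\,\mathrm{d}g$ directly --- which you also mention as a consistency check, so the two arguments coincide in substance.
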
	
\begin{proof}
See Appendix~\ref{App_B}.
\phantom\qedhere
\end{proof}

An interesting fact from \eqref{Pfin} with $\varpi^*=0$ is that for scenarios in which the fading is less severe, e.g., larger $m$, the average power consumption increases asymptotically approaching the case of infinite battery \eqref{Pinf} for practical systems where $\lambda>1$. This is
	\begin{align}
	\lim\limits_{m\rightarrow\infty}\bar{P}=\frac{\eta vP_d}{n\kappa d^{\alpha}}\min{(\lambda,1)},
	\end{align}
which makes sense since the channel tends to behave like an AWGN channel and no battery saturation occurs for $\lambda>1$.

Notice that the average energy consumption can be computed as $nT_c\bar{P}$ or $nT_c\bar{P}_{_{\infty}}$ for finite and infinite battery devices, respectively.
\section{Allowing Energy Accumulation}\label{PC}
Herein we analyze a scenario with energy accumulation between transmission rounds, where the charge of the battery is now given by 
\begin{align}\label{b1}
B_i=\min(B_{\mathrm{max}},B_{i-1}+E_i-P_{s,i-1}nT_c),
\end{align}
with $E_i$ obeying \eqref{Eh}. We develop  a power control protocol, with channel knowledge at $S$, in order to improve the performance of the  scenario discussed in Section~\ref{HT}. It is important to note that the model is independent from the channel fading distribution, not being restrict to Nakagami-m fading.
\subsection{Power Control Strategy}
Authors in \cite{Isikman.2016} propose the FTT power control protocol, based on the assumption that transmitting with certain power such that $\gamma=2^r-1$ is  sufficient for error-free communication. However, this assumption is not true, not even when $\gamma\gg2^r-1$ in a finite blocklength setup. How accurate are the results of the FTT scheme in a finite blocklength scenario is a question answered later in Subsection~\ref{seccC}, but first let us propose a finite blocklength variant of the FTT protocol (FB-FTT), which can be summarized as follows.
\begin{enumerate}
	\item Let $\epsilon_{_{\mathrm{th}}}$ be the maximum error probability at $D$.
	\item Then, $S$ chooses a transmit power $\hat{P}_s$, so that $\epsilon_i=\epsilon_{_{\mathrm{th}}}$ according to \eqref{e1}.
	\item If there is not enough energy in the battery for $S$ to transmit with sufficient power, then $S$ stays silent and saves energy for the next transmission round. That WIT block is considered lost, and $S$ attempts to transmit a new WIT block at the next round.
\end{enumerate}
Notice that the idea behind the FB-FTT strategy is to transmit with a power that allows to achieve a given SNR $\hat{\gamma}$ at $D$, which causes an error probability $\epsilon_{_{\mathrm{th}}}$, as long as there is a transmission from $S$. Then, knowing $\hat{\gamma}$ and the noise power at $D$, and based on \eqref{SNR}, the required transmit power is
\begin{align}\label{Psi}
\hat{P}_{s,i}=\frac{\hat{\gamma} \kappa d^{\alpha}\sigma_d^2}{g_i}.
\end{align}
Therefore, \eqref{b1} can be rewritten as
\begin{align}\label{b2}
B_i=\min\Big(&B_{_{\mathrm{max}}},E_i+B_{i-1}-\hat{P}_{s,i-1}nT_c\mathds{1}{\big(B_{i-1}>\hat{P}_{s,i-1}nT_c\big)}\!\Big),
\end{align}
where the value of the indicator function is $1$ if there is sufficient energy to support the transmission or $0$ otherwise, regulating the energy expenditure and, therefore, the state of charge at each round. Also, finding in closed form the required $\hat{\gamma}$ to reach an $\epsilon_{_{\mathrm{th}}}$ is algebraically impossible since we would have to solve for $\gamma$ the following approximate equation coming from \eqref{e1} \cite{Polyanskiy.2010}
\begin{align}\label{eq}
r\approx\log_2(1+\gamma)-\sqrt{1-\frac{1}{(1+\gamma)^2}}\frac{\log_2e\ Q^{-1}(\epsilon_{_{\mathrm{th}}})}{\sqrt{n}}.
\end{align}
\noindent However, since the required $\hat{\gamma}$ is fixed for each system setup $(n,k,\epsilon_{_{\mathrm{th}}})$, $S$ does not need to compute $\hat{\gamma}$  often\footnote{Notice that the value of the required $\hat{\gamma}$ could be even programmed in $S$ from the very beginning for static scenarios.}, and an iterative method is proposed in Algorithm~\ref{alg_1} to solve \eqref{eq}.
The idea is to iterate over
\begin{align}\label{gam}
\hat{\gamma}^{(t)}=2^{r+\frac{1}{\sqrt{n}}M^{(t-1)}\log_2e\ Q^{-1}(\epsilon_{_{\mathrm{th}}})}-1,
\end{align}
which comes from isolating $\gamma$ in \eqref{eq}, while abandoning the approximation notation by the equality, and using 
 \begin{align}\label{M}
 M^{(t)}=\sqrt{1-\frac{1}{(1+\hat{\gamma}^{(t)})^2}},
 \end{align}
 where $t$ is the iteration index. The choice for $M^{(0)}=1$ comes from the fact that this is a good approximation for high SNR. Also, $\gamma_{_{\Delta}}$ is the acceptable maximum difference between the value of $\hat{\gamma}$ found by the proposed algorithm and its real required value.
\begin{lemma}\label{conv}
	The required $\hat{\gamma}$ to reach $\epsilon_{_{\mathrm{th}}}$ is a unique solution to \eqref{gam}, and Algorithm~\ref{alg_1} (on the top of the next page) converges for $\epsilon_{_{\mathrm{eth}}}\le 0.5$.
\end{lemma}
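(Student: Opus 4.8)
The plan is to treat Lemma~\ref{conv} as two separate claims: (i) the SNR equation characterizing the target has a unique solution, and (ii) the recursion \eqref{gam}--\eqref{M} initialized at $M^{(0)}=1$ converges to it. First I would collapse the two-sequence recursion into a single scalar fixed-point iteration $\hat\gamma^{(t)}=f(\hat\gamma^{(t-1)})$ by substituting \eqref{M} into \eqref{gam}, where $f(\gamma)=2^{\,r+cM(\gamma)}-1$ with $M(\gamma)=\sqrt{1-(1+\gamma)^{-2}}$ and $c=\frac{\log_2 e\,Q^{-1}(\epsilon_{_{\mathrm{th}}})}{\sqrt n}$. A fixed point of $f$ satisfies $\log_2(1+\gamma)=r+cM(\gamma)$, which is exactly \eqref{eq} written as an equality and hence coincides with the $\hat\gamma$ for which \eqref{e1} returns $\epsilon_i=\epsilon_{_{\mathrm{th}}}$, so both claims can be phrased through $f$.

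For uniqueness I would show that the error approximation \eqref{e1} is strictly monotone in the SNR. Writing $u=1+\gamma$ and $\psi(u)=\frac{C(\gamma)-r}{\sqrt{V(\gamma)/n}}=\frac{\sqrt n}{\log_2 e}\cdot\frac{\log_2 u-r}{\sqrt{1-u^{-2}}}$, the aim is to prove $\psi'(u)>0$ for $u>2^r$. Differentiating and clearing positive factors, the sign of $\psi'$ is governed by the auxiliary function $\chi(u)=\log_2 e\,(u^2-1)-(\log_2 u-r)$; I would verify $\chi(2^r)=\log_2 e\,(2^{2r}-1)>0$ and $\chi'(u)=\log_2 e\,(2u-1/u)>0$ for $u\ge 1$, so $\chi>0$ on $[2^r,\infty)$ and $\psi$ is strictly increasing there. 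Since $Q(\cdot)$ is strictly decreasing, $\epsilon_i$ is then strictly decreasing in $\gamma$ on $(2^r-1,\infty)$, sweeping the range $(0,\tfrac12)$; hence for every $\epsilon_{_{\mathrm{th}}}\in(0,\tfrac12]$ there is exactly one $\hat\gamma\ge 2^r-1$ with $\epsilon_i=\epsilon_{_{\mathrm{th}}}$, i.e. a unique fixed point of $f$.

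For convergence I would run a monotone-sequence argument, and this is where the hypothesis $\epsilon_{_{\mathrm{th}}}\le 0.5$ enters: it forces $Q^{-1}(\epsilon_{_{\mathrm{th}}})\ge 0$, hence $c\ge 0$. Because $M(\gamma)$ is increasing and $c\ge 0$, $f$ is nondecreasing, and since $M(\gamma)\in[0,1)$ it maps $[2^r-1,\gamma_{\mathrm{max}}]$ into itself with $\gamma_{\mathrm{max}}=2^{\,r+c}-1$. The initialization $M^{(0)}=1$ gives $\hat\gamma^{(1)}=\gamma_{\mathrm{max}}$, the supremum of $f$, so $\hat\gamma^{(2)}=f(\hat\gamma^{(1)})\le\gamma_{\mathrm{max}}=\hat\gamma^{(1)}$; monotonicity of $f$ then propagates this by induction to a nonincreasing sequence bounded below by $2^r-1>0$. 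By the monotone convergence theorem the sequence has a limit, and continuity of $f$ forces the limit to be a fixed point, which by the uniqueness step is the required $\hat\gamma$. The tolerance $\gamma_{_\Delta}$ affects only the stopping time, not the limit.

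The main obstacle is the uniqueness step, specifically establishing strict monotonicity of $\psi$ (equivalently, that the finite-blocklength error term decreases in SNR), because its derivative mixes a logarithmic numerator with the dispersion factor $\sqrt{1-u^{-2}}$ that vanishes as $\gamma\to(2^r-1)^+$; the clean reduction to positivity of the polynomial-plus-log function $\chi$ is what makes it tractable. The convergence half is then routine once $c\ge 0$ is observed, so the only real subtlety there is recognizing that $\epsilon_{_{\mathrm{th}}}\le 0.5$ is exactly the condition rendering $f$ monotone and the iteration well-behaved.
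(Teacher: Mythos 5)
Your proof is correct, but it takes a genuinely different route from the paper's on both halves of the lemma. For uniqueness, the paper works with $g(\gamma)=2^{r}e^{MQ^{-1}(\epsilon_{_{\mathrm{th}}})/\sqrt{n}}-1$ and argues by cases on the sign of $Q^{-1}(\epsilon_{_{\mathrm{th}}})$: existence follows from the intermediate value theorem, and for $\epsilon_{_{\mathrm{th}}}<0.5$ uniqueness is obtained by showing $g$ is increasing and \emph{concave} with $g(0)>0$, where concavity is only verified under a (practically vacuous) condition $\epsilon_{_{\mathrm{th}}}>Q(46.64)\approx 4.4\times 10^{-475}$; your argument instead proves that the finite-blocklength error expression \eqref{e1} is strictly decreasing in the SNR on the relevant region, via positivity of $\chi(u)=\log_2 e\,(u^2-1)-(\log_2 u - r)$, which is clean, unconditional on $\epsilon_{_{\mathrm{th}}}\in(0,0.5]$, and of independent interest. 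For convergence, the paper evaluates the derivative at the fixed point and shows $|g'(\hat\gamma)|\le \ln 2<1$, invoking Banach/fixed-point theory; this buys an explicit linear convergence rate (contraction factor at most $\ln 2$), which matches the fast convergence reported in Section~\ref{SecConv}, but strictly speaking a derivative bound at the fixed point alone justifies only \emph{local} attraction, so the paper's claim of convergence ``provided any initial point'' is looser than what is proved. Your monotone-sequence argument --- observing that $M^{(0)}=1$ makes the first iterate the supremum $2^{r+c}-1$ of the iteration map, so the sequence is nonincreasing, bounded below, and must converge to the unique fixed point --- yields no rate, but it is rigorous and global for the algorithm's actual initialization, which is exactly what the lemma asserts. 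One minor slip in your commentary: the dispersion factor $\sqrt{1-u^{-2}}$ vanishes as $\gamma\to 0^{+}$ (i.e., $u\to 1^{+}$), not as $\gamma\to(2^r-1)^{+}$; this is immaterial since your analysis is confined to $u\ge 2^r>1$.
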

\begin{proof}
	See Appendix~\ref{App_C}.
	\phantom\qedhere
\end{proof}
According to Lemma~\ref{conv}: $\hat{\gamma}=\hat{\gamma}^{(\infty)}$ if $\gamma_{_{\Delta}}\rightarrow 0$, and we denote $M=M^{(\infty)}=\sqrt{V(\hat{\gamma})}\ln 2$. The required number of iterations for a given precision $\gamma_{_{\Delta}}$, is numerically investigated in Section~\ref{SecConv}.

\begin{algorithm}[t!]
	\caption{Finding the required $\hat{\gamma}$ for a given $(n,k,\epsilon_{_{\mathrm{th}}})$}
	\label{alg_1}		
	\begin{algorithmic} [1]
		\State $t=1,\ M^{(0)}=1$  \label{line01} %%01
		\State Calculate $\hat{\gamma}^{(t)}$ using \eqref{gam}\label{line02} %%02
		\State Calculate $M^{(t)}$ using \eqref{M}\label{line03}                    %%03			
		\If{$|\hat{\gamma}^{(t)}-\hat{\gamma}^{(t-1)}|>\gamma_{_{\Delta}}$} \label{line04}    %%05
		\State $t\rightarrow t+1$ \label{line05} %%06
		\State Return to line \ref{line02}						\label{line06}
		\EndIf        \label{line07}                     
		\State End \label{line08}         %%end
	\end{algorithmic}
\end{algorithm}
\begin{remark}
	When achieving $\epsilon_i=\epsilon_{_{\mathrm{th}}}$ is impossible, an alternative strategy could be transmitting with the maximum available power that the harvested energy allows. We refer to this second strategy as Finite Blocklength Fixed Threshold Uninterrupted Transmission (FB-FTUT) and it is only included in Section~\ref{results} in order to assess the performance of the FB-FTT protocol.
\end{remark}

\subsection{Overall Error Probability and Mean Power Consumption}\label{lemma1}
The overall error probability for this scenario is given by
\begin{align}\label{outage}
\varepsilon=(1-\epsilon_{_{\mathrm{out}}})\epsilon_{_{\mathrm{th}}}+\epsilon_{_{\mathrm{out}}},
\end{align}
where $\epsilon_{_{\mathrm{out}}}=\mathds{P}[B_i<\hat{P}_{s,i}nT_c]$ is the probability that the  energy available in the battery is insufficient to achieve the required $\gamma$ at $D$ for a given target error $\epsilon_{_{\mathrm{th}}}$. 

Notice that $\epsilon_{_{\mathrm{out}}}$ depends on the value of $\epsilon_{_{\mathrm{th}}}$. The higher the value of $\epsilon_{_{\mathrm{th}}}$, smaller $\hat{\gamma}$ and transmit power $\hat{P}_s$ are required, and the smaller the value of $\epsilon_{_{\mathrm{out}}}$, and vice versa. Unfortunately, it seems intractable to find a closed-form expression for $\epsilon_{_{\mathrm{out}}}$ due to the complexity of \eqref{b2}, and we resort to simulations in Section~\ref{results} in order to compute it. In addition, we can notice that $\varepsilon\ge\epsilon_{_{\mathrm{th}}}$ always, thus a relatively high value of $\epsilon_{_{\mathrm{th}}}$ can seriously limit the system performance for some setups. 
% An interesting fact related with this is given in the following lemma.
%
In fact,
%\begin{lemma}\label{lemma1}
	numerical evidence suggests that there is a unique optimum value of $\epsilon_{_{\mathrm{th}}}$, $\epsilon_{_{\mathrm{th}}}^*$, that minimizes the overall error probability in practical setups % where $\epsilon_{_{\mathrm{eth}}}<0.1$.
%\end{lemma}}
%\begin{proof}
	(see Appendix~\ref{App_D}).
%	\phantom\qedhere
%\end{proof}

In addition, the energy consumption of $S$, characterized in terms of its average transmit power, is as follows
\begin{theorem}\label{prop_3}
	The average transmit power of $S$ when using the proposed power control scheme, and when $m>1$, is %\hl{Aqui eh necessario que m seja maior que 1 ou que m seja inteiro e maior que 1?}
	\begin{align}
	\bar{P}&=(1\!-\!\epsilon_{_{\mathrm{out}}})\gamma\kappa d^{\alpha}\sigma_d^2m\bigg[\frac{1}{m\!-\!1}\!-\!\frac{\Gamma(m\!-\!1,m\lambda)}{\Gamma(m)}\bigg],\label{PC1}\\
	\bar{P}_{_{\infty}}&=(1\!-\!\epsilon_{_{\mathrm{out}}})\gamma\kappa d^{\alpha}\sigma_d^2\frac{m}{m\!-\!1},\label{PC2}
	\end{align}
	for finite and infinite battery devices, respectively.
\end{theorem}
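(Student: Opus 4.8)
The plan is to obtain $\bar P$ as the fading-averaged instantaneous transmit power, weighted by the probability that a transmission actually takes place. First I would start from the definition $\bar P = \mathds{E}\big[\hat P_{s,i}\,\mathds{1}(\text{$S$ transmits})\big]$, where a transmission occurs precisely when the stored energy suffices, an event whose complement has probability $\epsilon_{_{\mathrm{out}}}$. Substituting the power-control rule \eqref{Psi}, namely $\hat P_{s,i}=\hat\gamma\kappa d^{\alpha}\sigma_d^2/g_i$ (with $\hat\gamma$ the required SNR written $\gamma$ in the statement), lets me pull the deterministic constant $\hat\gamma\kappa d^{\alpha}\sigma_d^2$ and the transmission probability $(1-\epsilon_{_{\mathrm{out}}})$ out of the expectation, reducing the problem to evaluating a (possibly truncated) expectation of $1/g$ under $g\sim\Gamma(m,1/m)$.

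For the infinite-battery case I would compute $\mathds{E}[1/g]$ directly from the gamma PDF: $\mathds{E}[1/g]=\frac{m^m}{\Gamma(m)}\int_0^{\infty} g^{m-2}e^{-mg}\,\mathrm{d}g=\frac{m}{\Gamma(m)}\Gamma(m-1)=\frac{m}{m-1}$, using $\Gamma(m)=(m-1)\Gamma(m-1)$; this yields \eqref{PC2} at once. Convergence of the integral at the origin requires $m-2>-1$, i.e.\ $m>1$, which is exactly the hypothesis of the theorem and the reason the claim is restricted to $m>1$: for $m\le 1$ the mean transmit power diverges because the deep-fade events demand unbounded power.

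For the finite-battery case the integration range is capped by the battery-saturation threshold $\lambda=\frac{B_{\mathrm{max}}\kappa d^{\alpha}}{\eta P_d vT_c}$ introduced in Section~\ref{HT}, so the relevant quantity is the truncated moment $\mathds{E}\big[\tfrac1g\mathds{1}(g\le\lambda)\big]=\frac{m^m}{\Gamma(m)}\int_0^{\lambda} g^{m-2}e^{-mg}\,\mathrm{d}g$. With the substitution $u=mg$ this integral becomes the lower incomplete gamma function $\gamma(m-1,m\lambda)=\Gamma(m-1)-\Gamma(m-1,m\lambda)$, and rewriting $\Gamma(m-1)=\Gamma(m)/(m-1)$ gives precisely the bracketed factor $m\big[\tfrac{1}{m-1}-\tfrac{\Gamma(m-1,m\lambda)}{\Gamma(m)}\big]$ of \eqref{PC1}. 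Letting $\lambda\to\infty$ (equivalently $B_{\mathrm{max}}\to\infty$) recovers \eqref{PC2}, a useful consistency check.

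I expect the main obstacle to be the bookkeeping that isolates the saturation threshold $\lambda$ as the correct truncation limit and justifies factoring $(1-\epsilon_{_{\mathrm{out}}})$ out of the channel average, i.e.\ treating the ``sufficient-energy'' event and the instantaneous gain $g_i$ as separable; once the integration limits are fixed, the incomplete-gamma evaluation is routine.
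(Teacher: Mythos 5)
Your proposal is correct and follows essentially the same route as the paper's proof: factor out $(1-\epsilon_{_{\mathrm{out}}})$, substitute the power-control rule \eqref{Psi}, and evaluate the resulting gamma-weighted moment of $1/g$ over $[0,\lambda]$ (respectively $[0,\infty)$) in terms of incomplete gamma functions, exactly as in \eqref{PC3}. The only (welcome) additions are that you make explicit why $m>1$ is required (convergence of $\int_0^{\cdot} g^{m-2}e^{-mg}\,\mathrm{d}g$ at the origin) and that you verify $\lambda\to\infty$ recovers \eqref{PC2}, whereas the paper obtains \eqref{PC2} by that same substitution without further comment.
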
	
\begin{proof}
	See Appendix~\ref{App_E}.
	\phantom\qedhere
\end{proof}

Notice that when the fading is less severe, e.g., larger $m$, $\epsilon_{_{\mathrm{out}}}$ decreases while the remaining terms depending on $m$ tend to unity. That is because asymptotically, and considering $\varpi^*<1$, which has to be true in practice, we have that
\begin{align}
\lim\limits_{m\rightarrow\infty}\bar{P}&=\gamma\kappa d^{\alpha}\sigma_d^2\mathds{1}(\lambda>1),\\
\lim\limits_{m\rightarrow\infty}\bar{P}_{\infty}&=\gamma\kappa d^{\alpha}\sigma_d^2.
\end{align}
Therefore, it is expected an average transmit power very close to $\gamma\kappa d^{\alpha}\sigma_d^2$ for any practical system with $m\gg1$. In fact, the instantaneous transmit power tends to be exactly $\gamma\kappa d^{\alpha}\sigma_d^2$ since the channel tends to an AWGN channel, at the same time that no saturation or complete depletion of the battery ever occurs.

\subsection{How accurate is the FTT power control protocol at finite blocklength?\label{seccC}}
We know that an error-free communication setup is unreachable for any practical system at finite blocklength. Thus, the FTT strategy presented in \cite{Isikman.2016} is over optimistic in a finite blocklength scenario as it relies on the fact that only $\gamma=2^r-1$ is required for full transmit reliability. Using a power that allows reaching an SNR equal or very close to $2^r-1$, in order to  save energy while  increasing the chances of future transmissions, would lead to error probabilities close to $0.5$ for short blocklengths. Even when we go further than the limit of $2^r-1$, there are still certain chances of error while at the same time the chances of future transmissions are decreased since the energy saving process is negatively affected.

Algorithm~\ref{alg_1} aims at finding the required SNR, $\hat{\gamma}$, for certain required reliability. Of course, this value would be greater than $2^r-1$ for any practical setup, e.g., $\epsilon_{_{\mathrm{th}}}<0.5$. However, an interesting question is \textit{how greater the $\hat{\gamma}$ would be when compared with the limit $2^r-1$, which is only valid at infinite blocklength?} In order to shed some light on that matter we define $\delta$ as the quotient between the required SNR considering a finite blocklength and the asymptotic SNR limit for error-free communication at infinite blocklength, thus
\begin{align}\label{delta}
\delta&=\frac{\hat{\gamma}}{2^r-1}\stackrel{(a)}{=}\frac{2^{r+\frac{1}{\sqrt{n}}M\log_2e\ Q^{-1}(\epsilon_{_{\mathrm{th}}})}-1}{2^r-1}\stackrel{(b)}{=}e^{\frac{1}{\sqrt{n}}M Q^{-1}(\epsilon_{_{\mathrm{th}}})}+\frac{e^{\frac{1}{\sqrt{n}}M Q^{-1}(\epsilon_{_{\mathrm{th}}})}-1}{2^r-1},
\end{align}
where $(a)$ comes from using \eqref{gam} with $t\rightarrow\infty$, although notice that $M$ is still a function of $\hat{\gamma}$, and $(b)$ comes from algebraic transformations. When high data rates are required, e.g., high SNR regime, we have that
\begin{align}\label{lim}
\lim\limits_{r\rightarrow\infty}\delta=e^{\frac{1}{\sqrt{n}} Q^{-1}(\epsilon_{_{\mathrm{th}}})},
\end{align}
since $M\rightarrow 1$ when $\hat{\gamma}\rightarrow\infty$, which is a lower bound on $\delta$ because $\delta$ is a decreasing function of $r$. 
\begin{figure}[t!]
	\centering
	\subfigure{\includegraphics[width=0.65\textwidth]{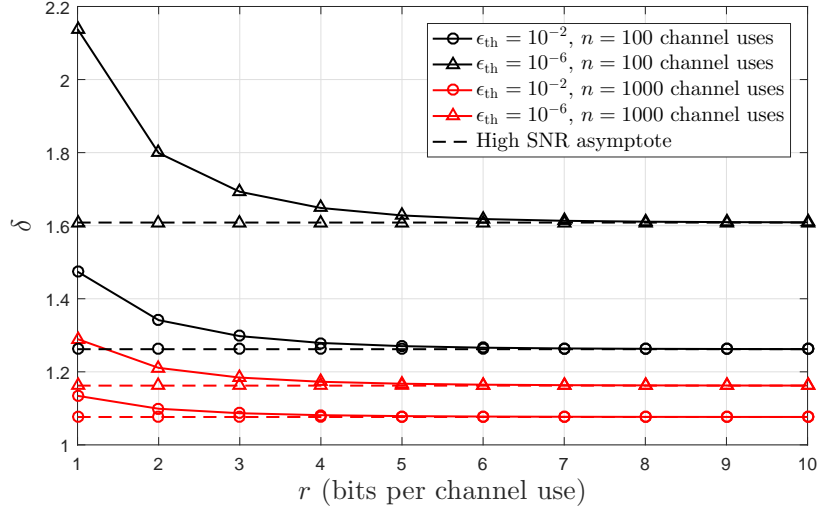}}
	\vspace{-3mm}
	\caption{$\delta$ as a function of $r$ for $n\in\{100,1000\}$ channel uses and $\epsilon_{_{\mathrm{th}}}\in\{10^{-2},\ 10^{-6}\}$.}	\label{Fig3}
	\vspace*{-5mm}	
\end{figure}
Fig.~\ref{Fig3} illustrates this behavior for $n\in\{100,1000\}$ channel uses and $\epsilon_{_{\mathrm{th}}}\in\{10^{-2},\ 10^{-6}\}$. The main remark from \eqref{lim}, which is also shown in the figure, is that the required SNR has to be at least $e^{\frac{1}{\sqrt{n}} Q^{-1}(\epsilon_{_{\mathrm{th}}})}$ times greater than the usual threshold of $2^r-1$ to reach an error probability no lower than $\epsilon_{_{\mathrm{th}}}$ while transmitting the information through $n$ channel uses. Obviously, this criterion is also applied to the transmit power. In fact, $\delta$ also approximates well to the quotient between the mean consumption power of the FB-FTT protocol and the FTT \cite{Isikman.2016} for practical scenarios where $\epsilon_{_{\mathrm{out}}}\ll1$, since $\frac{1-\epsilon_{_{\mathrm{out}}}^{\mathrm{FB-FTT}}}{1-\epsilon_{_{\mathrm{out}}}^{\mathrm{FTT}}}\approx 1$. Back to Fig.~\ref{Fig3}, notice that the asymptotic bound begins to be very tight already for $r\sim 4$ since $M>\sqrt{1-\tfrac{1}{(2^4)^2}}=0.998$, and $2^4-1=15$ is at least 10 times greater than the numerator of the fraction in the last equality in \eqref{delta} for any combination of $n\ge100$ channel uses and $\epsilon_{_{\mathrm{th}}}\ge 3\times 10^{-20}$. However, the lower the data rate and/or the shorter the information blocklength and/or the more stringent the target error probability, the greater the required SNR with respect to the threshold at infinite blocklength, thus showing how misleading is to calculate the SNR and rates using any scheme based on the assumption of infinite blocklength, such as the FTT protocol.

\section{Numerical Results}\label{results}
In this section, we present numerical results to investigate the performance of the proposed scheme as a function of the system parameters. Unless stated otherwise, results are obtained by setting, $P_d=3$W, $\alpha=3$, $d=9.8$m, and $\kappa=20$ dB is the average signal power attenuation at a reference distance of 1 meter. These values were chosen to provide an average power of $\sim 32\mu$W received at $S$, for which an efficiency around $\eta=0.11$ is available while operating with a sensitivity of $\varpi=4\mu$W \cite{Papotto.2011}\footnote{See \cite[Table~III]{Lu.2015} for summarized details on circuit performance for several RF energy harvester implementations.}.  
Moreover, $m=2$, $\sigma_d^2=-76$dBm and $k=312$ bits, while  we set $\gamma_{_{\Delta}}=10^{-3}$ to impose a high accuracy in the required value of $\gamma$ found by Algorithm~\ref{alg_1}.

\subsection{On the Accuracy of \eqref{AP2} and \eqref{AP3}}\label{res_app}
To measure the accuracy of the approximations made in \eqref{AP2} and \eqref{AP3} we evaluate the following error metric
\begin{align}
\xi=\frac{|\varepsilon_{2,\eqref{EX}}-\varepsilon_{2,(\rho)}|}{\varepsilon_{2,\eqref{EX}}},\label{errAp}
\end{align}
\begin{figure}[t!]
	\centering
	\subfigure{\label{Fig4a}\includegraphics[width=0.65\textwidth]{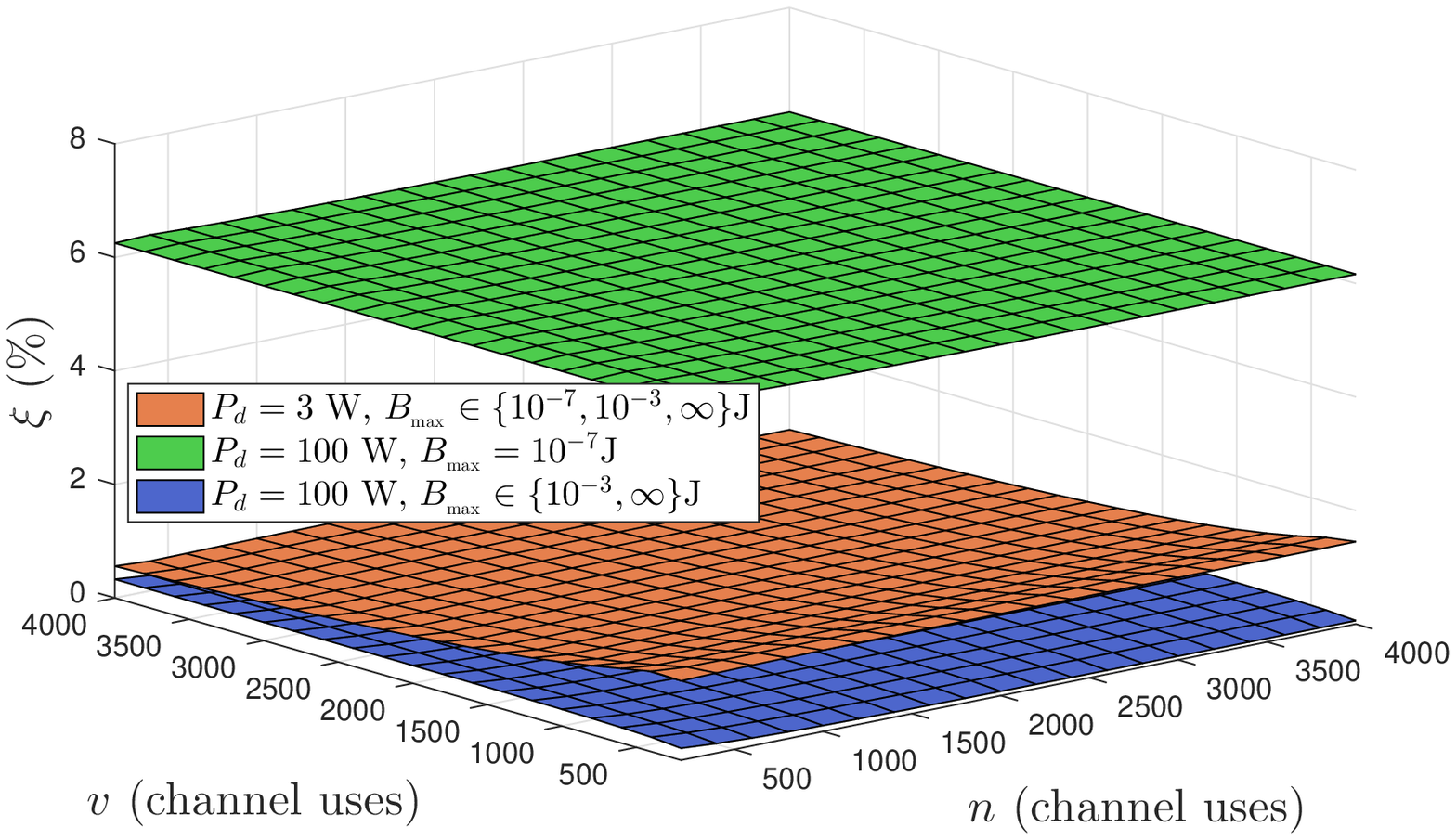}}\\
	\subfigure{\label{Fig4b}\includegraphics[width=0.65\textwidth]{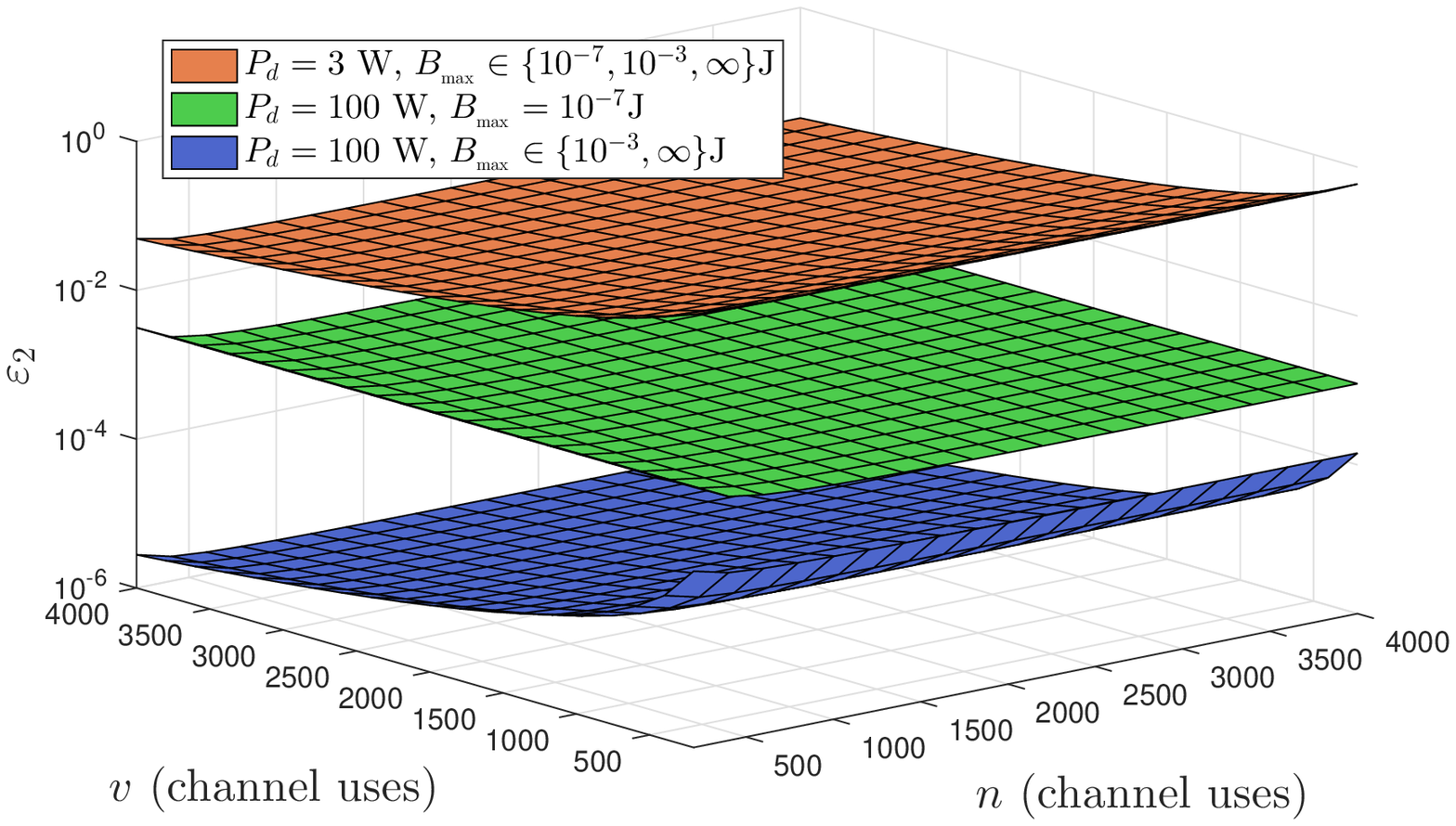}}	
	\vspace{-3mm}
	\caption{(a) $\xi(\%)$ (top) and (b) $\varepsilon_2$ (bottom), as a function of WET and WIT blocklengths. There is only one surface plot for $P_d=3$W with $B_{_{\mathrm{max}}}\in\{10^{-7},10^{-3},\infty\}$J and for $P_d=100$W with $B_{_{\mathrm{max}}}\in\{10^{-3},\infty\}$J, since they overlap and are all indistinguishable from each other.}	
	\label{Fig4}
	\vspace*{-5mm}
\end{figure} 
where $\varepsilon_{2,\eqref{EX}}$ is the error probability when communicating and given in \eqref{EX}, and $\varepsilon_{2,(\rho)}$, $\rho\in\{10,11\}$, are the approximate values given in \eqref{AP2} and \eqref{AP3}, for finite and infinite battery capacities, respectively. 
Both, $\varepsilon_{2,\eqref{EX}}$ and $\varepsilon_{2,(\rho)}$, are found for all points $(v,n)$ with $v,n\ge 100$ channel uses, where numerical evaluation is used to find $\varepsilon_{2,\eqref{EX}}$. After that, \eqref{errAp} can be computed, and according to Fig.~\ref{Fig4}a there is not a significant difference in the error approximation using \eqref{AP2} and \eqref{AP3} for a relative small transmit power $P_d=3$W, while this error starts to increase when the gap between the harvested energy and the battery size grows, e.g., $P_d=100$W and $B_{_{\mathrm{max}}}=10^{-7}$J. The exact error probability\footnote{Plotting the approximate error probability, $\varepsilon_{2,(\rho)}$, $\rho\in\{10,11\}$, would not produce appreciable differences in Fig.~\ref{Fig4}b due to the accuracy of the approximation discussed in  Fig.~\ref{Fig4}a.}, $\varepsilon_2=\varepsilon_{2,\eqref{EX}}$, for those cases is shown in Fig.~\ref{Fig4}b. Notice that for $P_d=3$W, $\varepsilon\sim10^{-1}$, while for $P_d=100$W the error when using relatively large batteries, e.g., $B_{_{\mathrm{max}}}\ge10^{-3}$J, is inferior to $10^{-4}$ due to the higher energy availability at $S$. The error probability decreases when the number of WET channel uses ($v$) increases.  
In addition, a relatively small battery, e.g., $B_{_{\mathrm{max}}}=10^{-7}$J, limits the error probability because the energy availability at each transmission round is severely limited and the chances of saving energy for future attempts decrease.
Thus, both the error approximation (Fig.~\ref{Fig4}a) and the error probability (Fig.~\ref{Fig4}b) are affected by a small battery.
\subsection{On the Convergence of Algorithm~\ref{alg_1}}\label{SecConv}
\begin{figure}[!t]
	\centering
	\subfigure{\label{Fig5a}\includegraphics[width=0.65\textwidth]{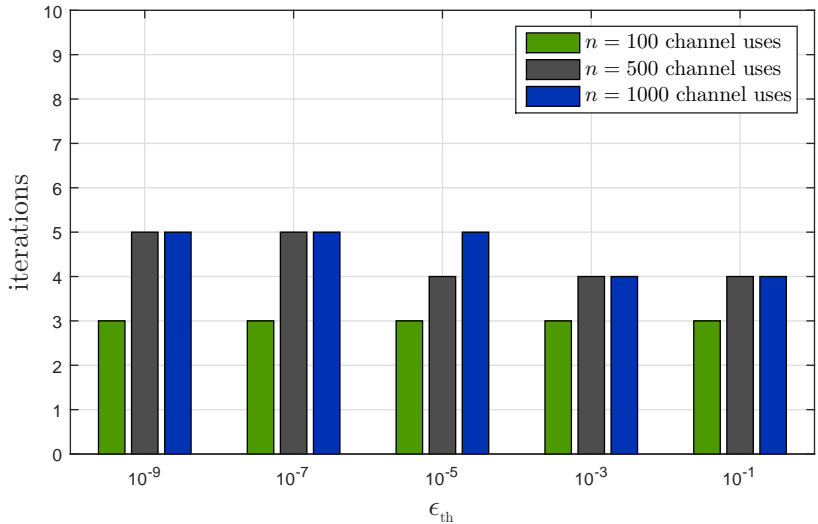}}	
	\vspace{-3mm}
	\caption{Required number of iterations to solve Algorithm~\ref{alg_1} as a function of $\epsilon_{_{\mathrm{th}}}$ for $n\in\{100,500,1000\}$ channel uses.}\label{Fig5}	
	\vspace*{-5mm}
\end{figure}
For scenarios with energy accumulation between transmission rounds, the required transmit power at $S$ is calculated at each round based on $\hat{\gamma}$, which can be found by running Algorithm~\ref{alg_1}. Fig.~\ref{Fig5} shows the required number of iterations to solve Algorithm~\ref{alg_1} as a function of the target error probability, $\epsilon_{_{\mathrm{th}}}$, for setups with $n\in\{100,500,1000\}$ channel uses. We can notice the very fast convergence of the iterative method in solving \eqref{eq}, even for a rigorous accuracy of $\gamma_{_{\Delta}}=10^{-3}$. As shown in the figure, small values of $\epsilon_{_{\mathrm{th}}}$ require more iterations, specially for relatively large values of $n$. When $n$ increases, the rate diminishes and the required $\hat{\gamma}$ becomes smaller, thus more iterations are necessary to solve the problem with the given accuracy.  If we decrease $\gamma_{_{\Delta}}$ the convergence would be slower. On the other hand, if we adopt a less demanding value of $\gamma_{_{\Delta}}$ such as $10^{-2}$, three iterations would be sufficient. Also, one of the main advantages of Algorithm~\ref{alg_1} is that it does not require an initial search interval, differently, for instance, from the bisection method. For certain search interval $I_{\gamma}$ on $\gamma$, and using the bisection method, we have that $\tfrac{I_{\gamma}}{2^{\mathrm{iterations}+1}}\le\gamma_{_{\Delta}}$. Thus, the required number of iterations shall not be less than $\log_2\Big(\tfrac{I_{\gamma}}{\gamma_{_{\Delta}}}\Big)-1$. As an example, if we search on an interval of width $I_{\gamma}=4$, for an accuracy of $\gamma_{_{\Delta}}=10^{-3}$ and $\gamma_{_{\Delta}}=10^{-2}$, 11 and 8 iterations would be respectively required under the bisection method, which are considerable higher than the required when using the Algorithm~\ref{alg_1}. As mentioned in Section~\ref{PC}, the value of $\hat{\gamma}$ has to be updated only when some element in $(n,k,\epsilon_{_{\mathrm{th}}})$ changes. Thus, it could be possible that Algorithm~\ref{alg_1} does not run in $S$, but in another entity which broadcasts its value.
\subsection{On the Performance of the Proposed Scheme}\label{res_per}
\begin{figure}[t!]
	\centering
	\subfigure{\label{Fig6a}\includegraphics[width=0.65\textwidth]{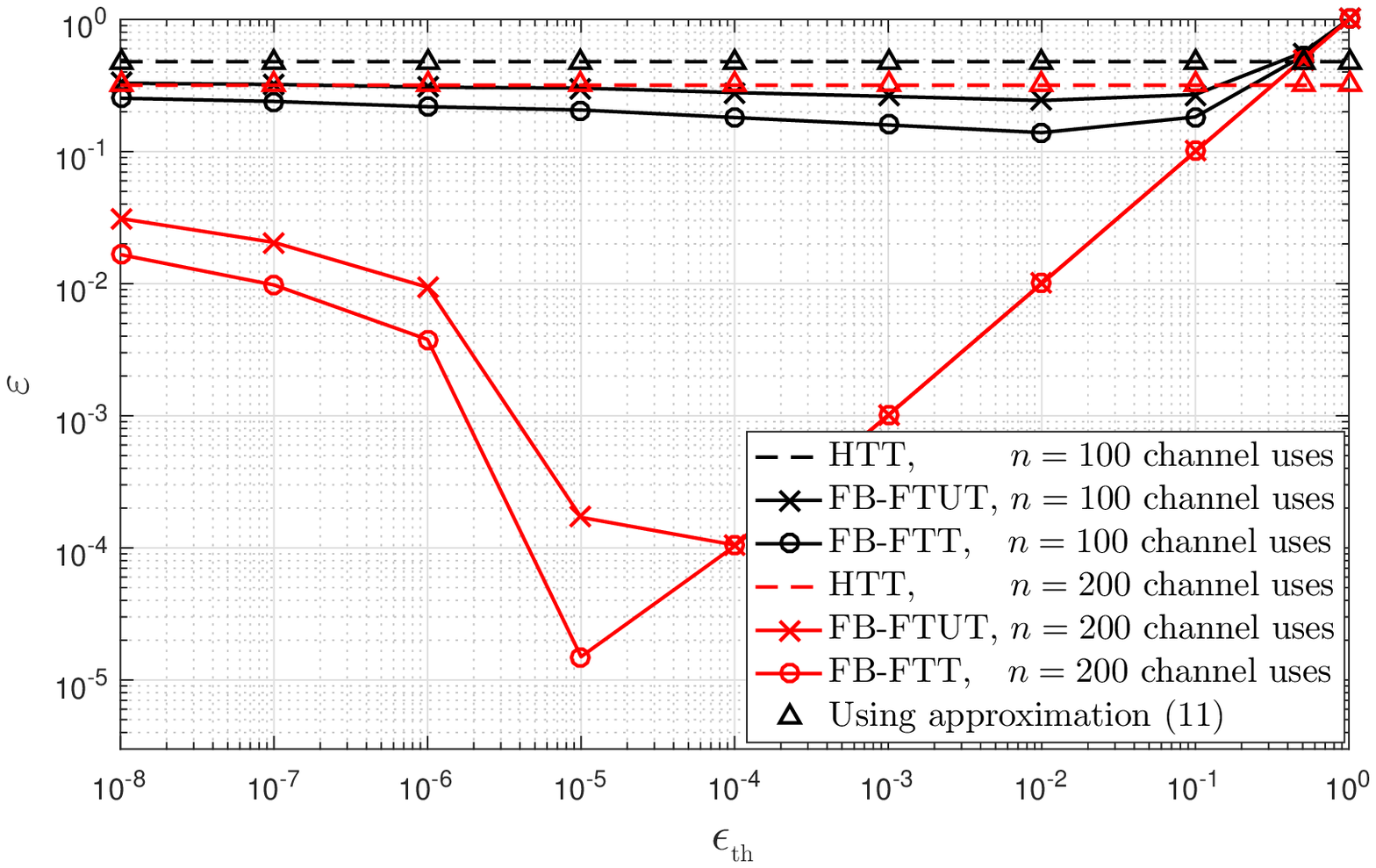}}\\
	\subfigure{\label{Fig6b}\includegraphics[width=0.65\textwidth]{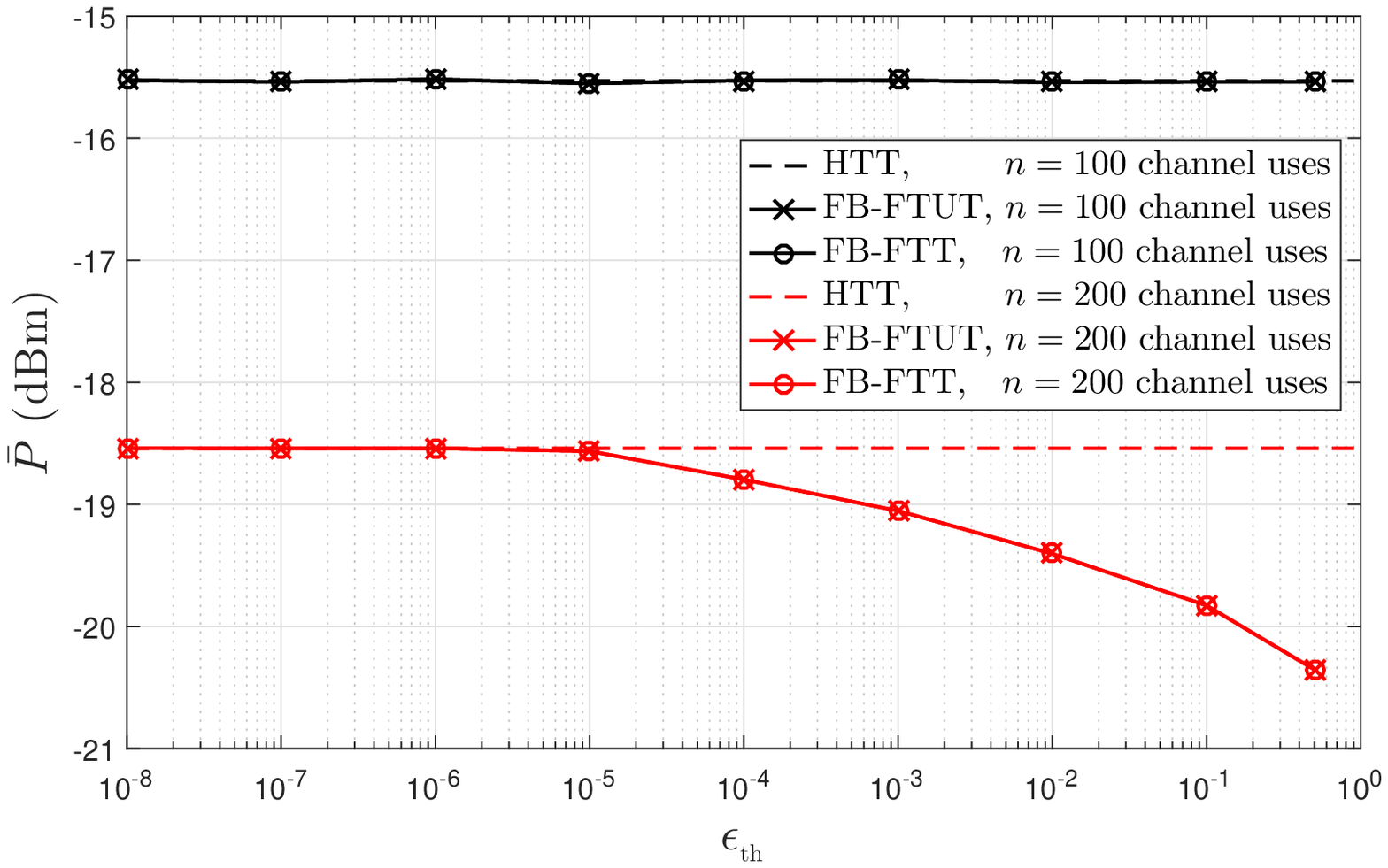}}	
	\vspace{-3mm}
	\caption{(a) $\varepsilon$ (top) and (b) $\bar{P}$ (bottom), as a function of $\epsilon_{_{\mathrm{th}}}$ for $v=800$ and $n\in\{100,200\}$ channel uses with $B_{_{\mathrm{max}}}=\infty$.}		
	\label{Fig6}
	\vspace*{-5mm}
\end{figure}
Fig.~\ref{Fig6} presents the overall error probability (Fig.~\ref{Fig6}a) and average transmit power (Fig.~\ref{Fig6}b), as a function of $\epsilon_{_{\mathrm{th}}}$  for $v=800$, $n\in\{100,200\}$ channel uses\footnote{Small $n$ and relatively large $v$ were chosen since they provide good performance according to Fig.\ref{Fig4}b.} and infinite battery capacity, while comparing the three protocols previously discussed: HTT (Section~\ref{HT}), FB-FTT and FB-FTUT (Section~\ref{PC}). In Fig~\ref{Fig6}a it is shown the existence and uniqueness of the optimum value of $\epsilon_{_{\mathrm{th}}}$ for the FB-FTT protocol\footnote{Also notice that if we draw the FB-FTT overall error performance on linear scale axes, the convexity becomes clear.}, which supports the claim made in Subsection~\ref{lemma1}. The FB-FTT scheme has the best performance for practical scenarios, e.g., $\epsilon_{_{\mathrm{th}}}<10^{-1}$, although the difference when comparing to FB-FTUT becomes smaller for relatively large values of $n$. 
For $n=200$ channel uses, the system has the best performance, thus $\epsilon_{_{\mathrm{th}}}^*$ is the smallest. In that case, the power control curves almost reach the allowable limit of $\varepsilon=10^{-5}$ for $\epsilon_{_{\mathrm{th}}}=10^{-5}$. % 
Notice that when $n$ increases, the required SNR and therefore the transmit power become smaller, and even when $S$ spends more time transmitting, the energy consumption decreases as shown in Fig~\ref{Fig6}b. This holds until certain $n$, $n^*$, and beyond that the weight of the transmitting time is more relevant than the small transmit power. 
Decreasing $\epsilon_{_{\mathrm{th}}}$ allows saving more energy while the average transmit power decreases as shown in Fig.~\ref{Fig6}b, however the error performance is bounded by this value. Notice also that the average transmit power, and consequently the average energy consumption,  is practically the same for both FB-FTT and FB-FTUT strategies, thus FB-FTT is more energy efficient since it allows reaching a better error performance.
Finally, we can note the remarkable performance gap between HTT and the power control protocols, which reinforces the appropriateness of the idea behind saving energy between transmission rounds. 

\begin{figure}[t!]
	\centering
	\subfigure{\label{Fig7a}\includegraphics[width=0.65\textwidth]{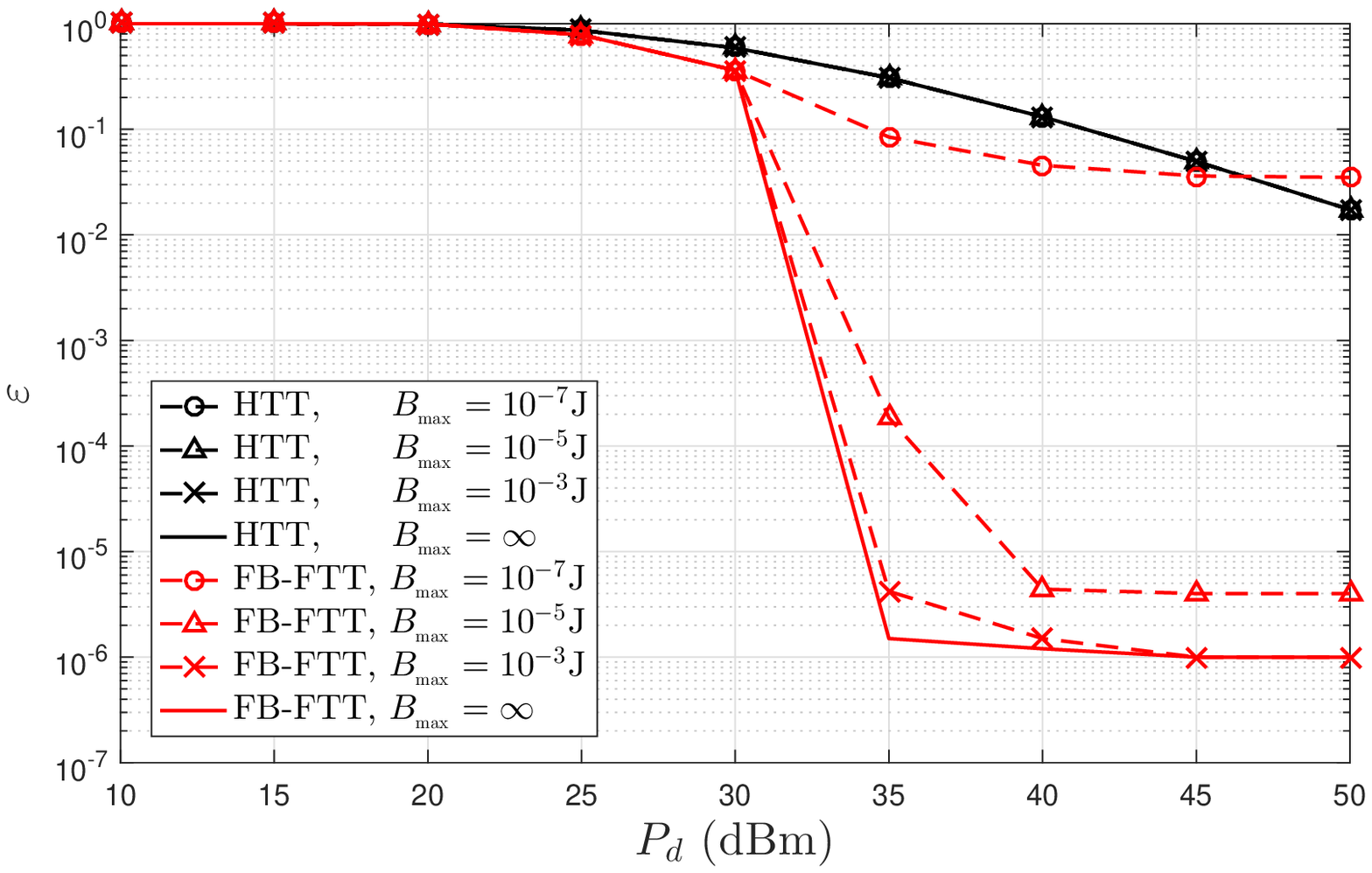}}\\
	\subfigure{\label{Fig7b}\includegraphics[width=0.65\textwidth]{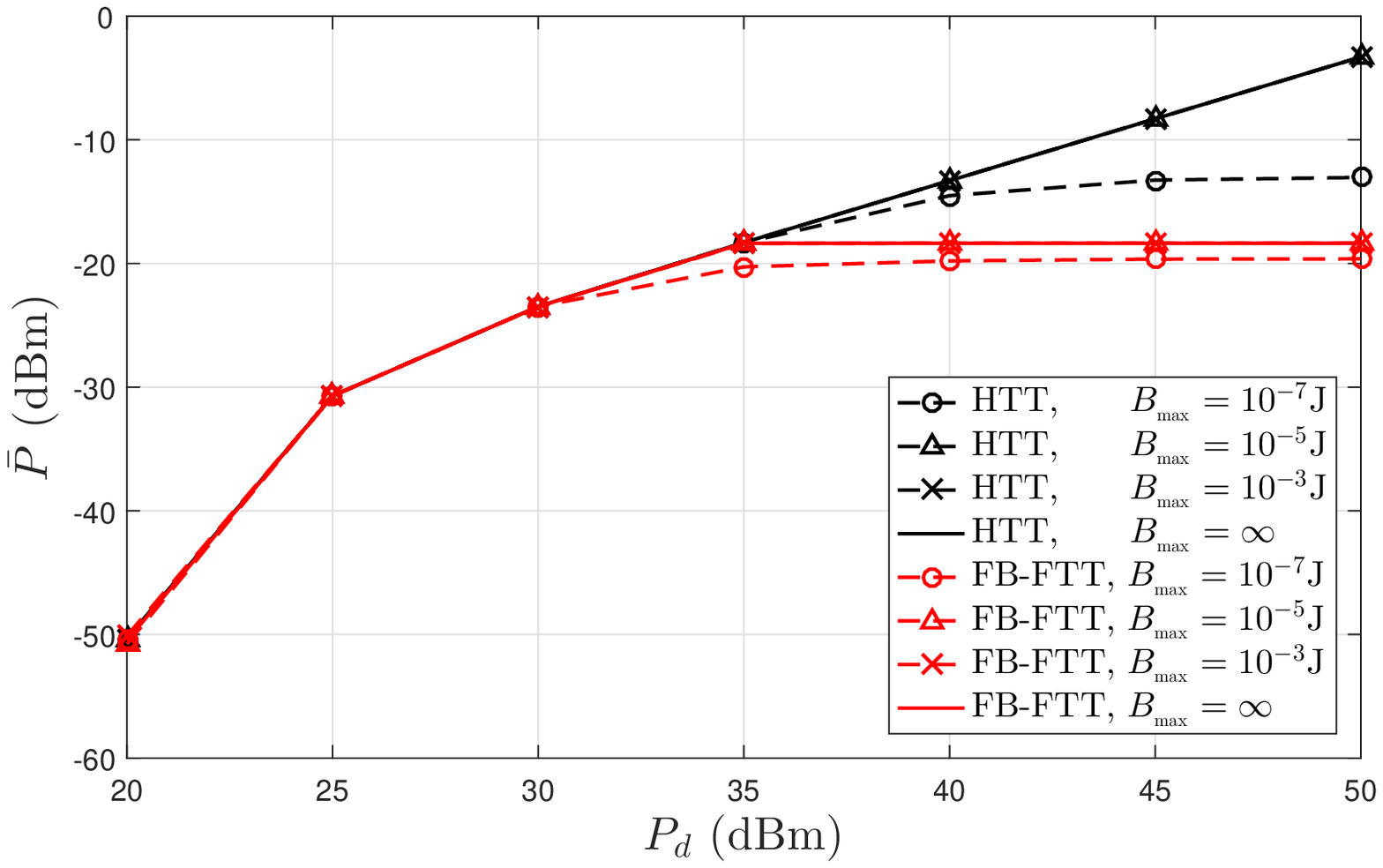}}	
	\vspace{-3mm}
	\caption{System performance as a function of $P_d$, (a) $\varepsilon$ (top) and (b) $\hat{P}$ (bottom), for $B_{_{\mathrm{max}}}\in\{10^{-7},10^{-5},10^{-3},\infty\}$J, $\epsilon_{_{\mathrm{th}}}=10^{-6}$ and $v=800$, $n=200$ channel uses.}\label{Fig7}	
	\vspace*{-5mm}	
\end{figure}
In Fig.~\ref{Fig7} we evaluate the impact of battery capacity, $B_{_{\mathrm{max}}}\in\{10^{-7},10^{-5},10^{-3},\infty\}$J, while comparing the  performance of HTT and FB-FTT protocols in terms of $\varepsilon$ (Fig.~\ref{Fig7}a) and $\hat{P}$ (Fig.~\ref{Fig7}b) as a function of $P_d$, for $\varepsilon_{_{\mathrm{th}}}=10^{-6}$ and $v=800$, $n=200$ channel uses. As shown in Fig.~\ref{Fig7}a, the impact of a finite battery capacity on the error performance is insignificant for the HTT protocol since there is no energy accumulation between transmission rounds. Therefore, only when a high amount of energy is being transferred, e.g., $P_d>50$dBm, the gap should start to be appreciable.
%This is equivalent to system scenarios with a large number of WET channel uses ($v$), a larger influence of the line of sight ($m$), or with small operating distances, etc.
However, for the FB-FTT scheme the situation is more delicate since the larger the battery capacity, the greater the chances to save more energy for future transmissions, thus the better the error performance (Fig.~\ref{Fig7}a) and the larger the average energy consumption (Fig.~\ref{Fig7}b). Notice the small system performance gap between setups with $B_{_{\mathrm{max}}}=10^{-3}$J and $B_{_{\mathrm{max}}}=\infty$, and this is due to the small amount of energy being harvested in these setups with short WET phase. It is evident that for $P_d>30$dBm we have $\epsilon_{_{\mathrm{th}}}^*<10^{-6}$, since the system setup favors a better performance. In that case the gap between $B_{_{\mathrm{max}}}=10^{-3}$J and $B_{_{\mathrm{max}}}=\infty$ becomes more significant since more energy is being transfered. Also, the overall error probability improves for $P_d<30$dBm if we choose a smaller target error probability. The average transmit power for HTT protocol remains almost constant around $\hat{P}=-13$dBm for $B_{_{\mathrm{max}}}=10^{-7}$J and $P_d>40$dBm since $\mathds{E}[E_i]> B_{_{\mathrm{max}}}$, e.g., $\mathds{E}[E_i]\big|_{P_d=40\mathrm{dBm}}\approx 10^{-7}$J, thus $B_i\approx B_{_{\mathrm{max}}}$ and according to \eqref{Ps} $P_{s,i}\approx\frac{10^{-7}\mathrm{J}}{200\times 10^{-5}\mathrm{s}}=5\times 10^{-5}\rightarrow-13$dBm, both holding almost all the time. The FB-FTT protocol reaches an even small energy consumption since it does not spend all the available energy in each round, specially when channel conditions are favorable.

\begin{figure}[t!]
	\centering
	\subfigure{\includegraphics[width=0.65\textwidth]{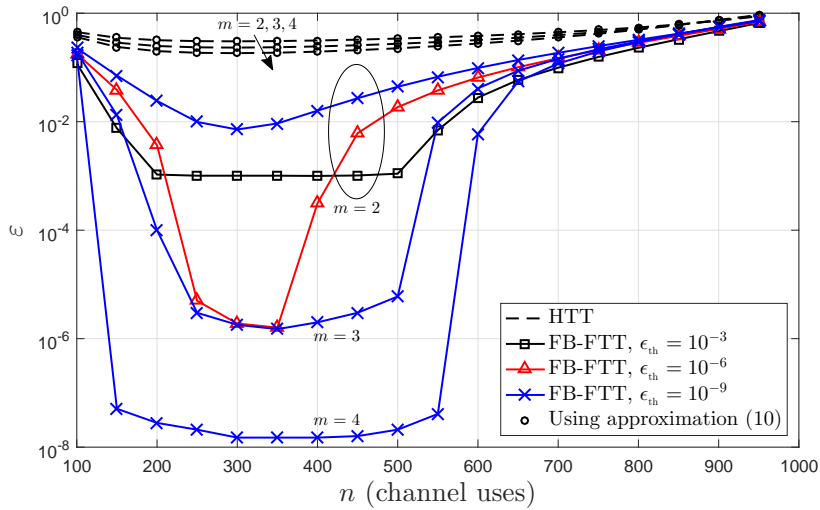}}
	\vspace{-3mm}
	\caption{$\varepsilon$ as a function of $n$ for $\epsilon_{_{\mathrm{th}}}\in\{10^{-3},10^{-6},10^{-9}\}$, $B_{_{\mathrm{max}}}=10^{-3}$J and fixed delay $n+v=1000$ channel uses. In addition to curves for $m=2$, we also plot HTT and FB-FTT with $\epsilon_{_{\mathrm{eth}}}=10^{-9}$ for $m=3,4$.}\label{Fig8}	
	\vspace*{-5mm}	
\end{figure}

In Fig.~\ref{Fig8} we fix the system delay in delivering each message by setting $n+v=1000$ channel uses, e.g., $1000T_c=10$ms, which could be fundamental in systems with very stringent delay constraints, such as Ultra-Reliable Communication over Short Term (URC-S) scenarios for future wireless systems \cite{Durisi.2015}. Results are given as a function of $n$, for $\epsilon_{_{\mathrm{th}}}\in\{10^{-3},10^{-6},10^{-9}\}$ and $B_{_{\mathrm{max}}}=10^{-3}$J. 
In URC-S scenarios, a high reliability is also required, which could be achieved via the proposed FB-FTT scheme as shown here. 
Notice that, even for very good channel conditions like $m=4$, HTT performs poorly all the time, e.g., $\varepsilon>10^{-1}$; while with the appropriate chosen value of $\epsilon_{_{\mathrm{th}}}$, the FB-FTT protocol can offer a much better performance. Focus first on the $m=2$ setup and observe that the FB-FTT protocol achieves an error probability around $\varepsilon=10^{-6}$ for $n\sim350$ channel uses. Also, when $n\ge 400$ channel uses, a target error probability greater than $10^{-6}$ is required in order to achieve the optimum system performance. Then, $n^*\sim350$, $v^*\sim650$ channel uses are approximately the optimum values for reaching the target error  $\epsilon_{_{\mathrm{th}}}^*\sim10^{-6}$ within the given delay constraint of $1000$ channel uses in channels with $m=2$. In scenarios where channels have a larger influence of the line of sight, $m$, the chances of success are greater, because it is  likely that more energy will be harvested in the downlink each time and stored for future energy-demanding transmissions\footnote{$P_d$ and $m$ impact similarly on the system performance because of that.}. The greater $m$, the smaller the optimum target error. In fact, and according to Fig.~\ref{Fig8}, it is expected that $10^{-8}<\epsilon_{_{\mathrm{th}}}^*<10^{-9}$.
All these results clearly show the convenience of a joint optimization of $n,\ v$ and $\epsilon_{_{\mathrm{th}}}$. Notice that an off-line optimization, which yields optimum parameters a priori and valid for long periods, seems more convenient for energy-constrained setups since it avoids the interchange of additional information between the nodes. 

Finally, results in Figs.~\ref{Fig6}a and \ref{Fig8} corroborate the accuracy of expressions \eqref{AP2} and \eqref{AP3}, claimed when discussing Fig.~\ref{Fig4}a.
\section{Conclusion}\label{conclusions}
In this paper, we evaluated a point-to-point communication system at finite blocklength regime with WET in the downlink, WIT in the uplink and a finite battery capacity. We attained closed-form expressions for error probability and average transmit power in scenarios where energy accumulation between transmission rounds are allowed or not, Nakagami-$m$ reciprocal channels are assumed, and the sensitivity of the energy harvester is taken into account. For scenarios allowing energy accumulation we propose a power control protocol with CSI at the transmitter side, which can be seen as a variant for finite blocklength of the FTT scheme \cite{Isikman.2016}. The numerical results show that
\begin{itemize}
	\item the closed-form approximations for the case without energy accumulation between transmission rounds (HTT protocol), under the assumption of finite and infinite battery devices, are pretty accurate when batteries are not extremely small;
	\item saving energy for future transmissions (FB-FTT scheme) allows to improve the system performance in terms of error probability while reducing the energy consumption;
	\item the proposed iterative method (Algorithm~\ref{alg_1}), which allows to find the required SNR for a target error probability, converges very fast;
	\item the optimum system performance depends on the chosen target error probability value, which in turns depends on the remaining system parameters;
	\item there is an optimum value of target error probability that minimizes the achievable error probability. However, the higher the target error probability, the lower the energy consumption;		
	\item a relatively small battery could be a limiting factor for some setups and specially for scenarios allowing energy accumulation between transmission rounds.
\end{itemize}
As a future work we intend to analyze the impact of imperfect CSI, while considering the additional delay and the energy consumption required for CSI acquisition. In addition, it could be interesting to incorporate power allocation strategies in WPCN with HARQ or/and cooperative mechanisms with finite blocklength.
\vspace*{-4mm}
\appendices 
\section{Proof of Theorem~\ref{prop_1}}\label{App_A}
Let $I_1$ and $I_2$ be the first and second integral in \eqref{EX}, respectively. Then, and accordingly to \eqref{AP}, $\mu_1=\beta$, $t=2$ for $I_1$ and $\mu_2=\beta\lambda$, $t=1$ for $I_2$. %Also, $\zeta_j=\sqrt{\tfrac{\varrho}{\mu_j}}$ and $\varphi_j=\sqrt{\tfrac{\vartheta}{\mu_j}}$, where $j\in\{1,2\}$.

Substituting \eqref{AP} into \eqref{EX}, $I_1$ can be approximated as follows
\begin{align}\label{I1}
I_1&\approx\!\int_{z_{13}}^{z_{11}}\!f_G(g)\mathrm{d}g+\omega_1\!\int_{z_{14}}^{z_{12}}\!f_G(g)\mathrm{d}g-\omega_2\!\int_{z_{14}}^{z_{12}}\!g^2f_G(g)\mathrm{d}g\nonumber\\
&\stackrel{(a)}{\approx} F_G(z_{11})-F_G(z_{13})+\omega_1F_G(z_{12})-\omega_1F_G(z_{14}) -\omega_2\int_{z_{14}}^{z_{12}}\frac{m^m}{\Gamma(m)}g^{m+1}e^{-mg}\mathrm{d}g\nonumber\\
&\stackrel{(b)}{\approx}\frac{1}{\Gamma(m)}\bigg[\Gamma\big(m,mz_{13}\big)-\Gamma\big(m,mz_{11}\big)+\omega_1\Big(\Gamma\big(m,mz_{14}\big)-\Gamma\big(m,mz_{12}\big)\Big)+\nonumber\\
&\qquad\qquad\qquad+\frac{\omega_2}{m^2}\Big(\Gamma\big(m+2,mz_{12}\big)-\Gamma\big(m+2,mz_{14}\big)\Big)\bigg],
\end{align}
%
%where $\omega_1=\big(\frac{1}{2}+\frac{\phi\theta}{\sqrt{2\pi}}\big)$, $\omega_2=\frac{\phi\beta}{\sqrt{2\pi}}$, $z_{11}=\min(\zeta_1,\lambda)$ and $z_{12}=\min(\varphi_1,\lambda)$. Also,
where $(a)$ comes from using the CDF definition of a random variable along with substituting the PDF of $G$ in the last term. In $(b)$, the CDF expression of $G$ is used, while the last term comes from algebraic transformations of the incomplete gamma function definition  \cite[eq.(8.2.1)]{Frank.2010}. Similarly to $I_1$, $I_2$ can be approximated as follows
\begin{align}\label{I2}
I_2&\approx\!\int_{z_{23}}^{z_{21}}\!f_G(g)\mathrm{d}g\!+\!\omega_1\!\int_{z_{21}}^{z_{22}}\!f_G(g)\mathrm{d}g\!-\!\omega_2z_{23}\!\int_{z_{21}}^{z_{22}}\!gf_G(g)\mathrm{d}g\nonumber\\
&\approx F_G(z_{21})-F_G(z_{23})+\omega_1F_G(z_{22})-\omega_1F_G(z_{21}) -\omega_2z_{23}\int_{z_{21}}^{z_{22}}\frac{m^m}{\Gamma(m)}g^{m}e^{-mg}\mathrm{d}g\nonumber\\
&\approx \frac{1}{\Gamma(m)}\bigg[(\omega_1-1)\Gamma\big(m,mz_{21}\big)+\Gamma\big(m,mz_{23}\big)-\omega_1\Gamma\big(m,mz_{22}\big)+\nonumber\\
&\qquad\qquad\qquad  +\frac{\omega_2z_{23}}{m}\Big(\Gamma\big(m+1,mz_{22}\big)-\Gamma\big(m+1,mz_{21}\big)\Big)\bigg].%, 
\end{align}
%
%where $z_{21}=\max(\zeta_2^2,\lambda)$ and $z_{22}=\max(\varphi_2^2,\lambda)$. 

Then, substituting \eqref{I1} and \eqref{I2} into $\varepsilon\approx I_1+I_2$ \eqref{EX} we attain \eqref{AP2}. Now, notice that in the case of infinite battery assumption, $\lambda\rightarrow\infty$, $\varepsilon\approx I_1$ holds. Also, $z_{11}=\zeta_1$, $z_{12}=\varphi_1$, $z_{13}=z_{15}$ and $z_{14}=z_{16}$ which allows to attain \eqref{AP3}.  \hfill 	\qedsymbol
%This concludes the proof.
\vspace*{-4mm}
\section{Proof of Theorem~\ref{prop_2}}\label{App_B}
By using  \eqref{Ps} and the PDF and CDF expressions of the channel gain $g$ we attain
	
	\begin{align}\label{T2_1}
	\bar{P}&=\mathds{E}[P_{s,i}]=\int_{0}^{\infty}P_{s,i}f_G(g_i)\mathrm{d}g=\frac{\eta vP_d}{n\kappa d^{\alpha}}\bigg[\int_{\varpi^*}^{\tau}gf_G(g)\mathrm{d}g+\lambda\int_{\tau}^{\infty}f_G(g)\mathrm{d}g\bigg]\nonumber\\
	&=\frac{\eta vP_d}{n\kappa d^{\alpha}}\bigg[\frac{m^m}{\Gamma(m)}\int_{\varpi^*}^{\tau}g^me^{-mg}\mathrm{d}g+\lambda\big(1-F_G(\tau)\big)\bigg]\nonumber\\
	&\stackrel{(a)}{=}\frac{\eta vP_d}{n\kappa d^{\alpha}}\bigg[-\frac{\Gamma(m+1,mg)}{\Gamma(m+1)}\bigg|_{\varpi^*}^{\tau}+\lambda\frac{\Gamma(m,m\tau)}{\Gamma(m)}\bigg],	
	\end{align}
	where the first term in $(a)$ comes from algebraic transformations of the incomplete gamma function definition \cite[eq.(8.2.1)]{Frank.2010}. Showing that \eqref{T2_1} is equivalent to \eqref{Pinf} is straightforward. Now, if $B_{_{\max}}=\infty$ then
	\begin{align}
	\bar{P}_{_{\infty}}\!=\!\mathds{E}[P_{s,i}]\!=\!\frac{\eta vP_d}{n\kappa d^{\alpha}}\int_{\varpi^*}^{\infty}\!\!gf_G(g)\mathrm{d}g\!=\!\frac{\eta vP_d}{n\kappa d^{\alpha}}\bigg[\!-\!\frac{\Gamma(m\!+\!1,mg)}{\Gamma(m\!+\!1)}\bigg]\bigg|_{\varpi^*}^{\infty}\!=\!\frac{\Gamma(m\!+\!1,m\varpi^*)}{\Gamma(m\!+\!1)}\frac{\eta vP_d}{n\kappa d^{\alpha}},
	\end{align}
	which is equal to \eqref{Pinf}.\hfill 	\qedsymbol%, concluding the proof.
\vspace*{-4mm}
\section{Proof of Lemma~\ref{conv}}\label{App_C} 
   %Intuitively, it must exist a unique SNR allowing to reach certain reliability given the transmit rate and the information blocklength. Herein we provide a mathematical approach to address it.
    Finding $\hat{\gamma}$ reduces to solve \eqref{eq}, which  could be stated as $f(\gamma)=g(\gamma)-\gamma=0$, where $g(\gamma)=q_1q_2^{M}-1$, $M=M^{(\infty)}$ is a function of $\gamma$ \eqref{M}, $q_1=2^r\ge 1$ since $r\ge 0$, and $q_2=e^{\tfrac{Q^{-1}(\epsilon_{_{\mathrm{th}}})}{\sqrt{n}}}$. Note that    
   $f(\gamma)$ is continuous, while $f(0)=q_1-1\ge 0$ and $\lim\limits_{\gamma\rightarrow\infty}f(\gamma)=-\infty$, thus there is at least one $\gamma$ such that $f(\gamma)=0$. Based on the equation to solve, e.g., $g(\gamma)=\gamma$ with $\gamma\in\mathcal{R}^+$, we can argue as follows
   \begin{itemize}
   	\item Case I: $\epsilon_{_{\mathrm{eth}}}\ge 0.5$
   	   	
   	For this case $Q^{-1}(\epsilon_{_{\mathrm{th}}})\le 0$, thus $g(\gamma)$ is non-increasing and $\gamma$ is increasing and there is only one solution to $g(\gamma)=\gamma$.
   	
   	\item Case II: $\epsilon_{_{\mathrm{eth}}}<0.5$
   	
   	Now $Q^{-1}(\epsilon_{_{\mathrm{th}}})> 0$, thus $g(\gamma)$ is also increasing. Taking its derivatives we have
   	\begin{align}
   	g'(\gamma)&=\frac{q_1q_2^M\ln(q_2)}{(1+\gamma)^3M},\label{derq}\\
   	g''(\gamma)&=-\frac{\ln(q_2)q_1b^M}{M(1+\gamma)^2}\Big[\frac{1}{(1+\gamma)^2M}+3-\frac{\ln(q_2)}{(1+\gamma)^2}\Big],
   	\end{align}
   	where $\ln(q_2)>0$. Thus, we can claim that $g(\gamma)$ is concave if
   	\begin{align}
     \frac{1}{(1+\gamma)^2M}+3-\frac{\ln(q_2)}{(1+\gamma)^2}&> 0 \Rightarrow q_2< e^{\tfrac{1}{M}+3(1+\gamma)^2}\nonumber\\     
     Q^{-1}(\epsilon_{_{\mathrm{th}}})&<\sqrt{n}\big(\tfrac{1}{M}+3(1+\gamma)^2\big)\nonumber\\
     \epsilon_{_{\mathrm{th}}}&>Q\Big(\sqrt{n}\big(\tfrac{1}{M}+3(1+\gamma)^2\big)\Big),
   	\end{align}
   	\noindent
   	where the right side is maximized for the minimum value of $\sqrt{n}\big(\tfrac{1}{M}+3(1+\gamma)^2\big)$. Setting $n=100$, which is the minimum value for which all the analyses are valid, and $\gamma=0.1655$, which minimizes the remaining terms, we reach $\epsilon_{_{\mathrm{th}}}>Q(46.6364)\approx 4.4\times10^{-475}$. Evidently, that requirement is met for any setup of practical interest. Thus, $g(\gamma)$ is increasing and concave and since $g(0)>0$, which is the starting point of line $\gamma$, we conclude that they intersect at one point only. Therefore, the solution is unique.
   \end{itemize}
   
   %For a given $r$ and $n$ there is only one value of $\gamma\in\mathcal{R}^+$ for which the error probability reached is $\epsilon_{_{\mathrm{th}}}$. That value is the solution of \eqref{eq}, which is a fixed point of
   Thus, we can say that the unique solution, $\hat{\gamma}$, is a fixed point of $2^{r+\tfrac{M\log_2eQ^{-1}(\epsilon_{_{\mathrm{th}}})}{\sqrt{n}}}-1$, e.g., $\hat{\gamma}=2^{r+\tfrac{M\log_2eQ^{-1}(\epsilon_{_{\mathrm{th}}})}{\sqrt{n}}}-1$ as shown in \eqref{gam}. Based on the Fixed Point Theory \cite{Agarwal.2001}, if $|g(\hat{\gamma})|<1$, the fixed point iteration in \eqref{gam} will converge to the solution.  
   Using \eqref{derq} evaluated on the solution $\hat{\gamma}$ and performing some algebraic transformations, yields
   \begin{align}
   |g'(\hat{\gamma})|&=\bigg|\frac{q_1q_2^M\ln(q_2)}{(1+\hat{\gamma})^3M}\bigg|\stackrel{(a)}{=}\frac{2^re^{\frac{MQ^{-1}(\epsilon_{_{\mathrm{eth}}})}{\sqrt{n}}}\frac{|Q^{-1}(\epsilon_{_{\mathrm{eth}}})|}{\sqrt{n}}}{(1+\hat{\gamma})^3M}\nonumber\\
   &\stackrel{(b)}{=}\frac{2^re^{\frac{\log_2(1+\hat{\gamma})-r}{\log_2e}}\frac{|\log_2(1+\hat{\gamma})-r|}{M\log_2e}}{(1+\hat{\gamma})^3M}\stackrel{(c)}{=}\frac{|\log_2(1+\hat{\gamma})-r|}{\hat{\gamma}(\hat{\gamma}+2)\log_2e}\label{gd},
   \end{align}
   where $(a)$ and $(b)$ come from using the expressions of $q_1$ and $q_2$, and $Q^{-1}(\epsilon_{_{\mathrm{eth}}})=\frac{\log_2(1+\hat{\gamma})-r}{\frac{M\log_2e}{\sqrt{n}}}$ (see \eqref{e1}), respectively; while $(c)$ is attained after substituting $M=\sqrt{1-\frac{1}{(1+\hat{\gamma})^2}}$  followed by some simplifications. Notice that for $\epsilon_{_{\mathrm{eth}}}\le0.5$, which is the case of practical interest, we have that $\log_2(1+\hat{\gamma})\ge r$, thus
   \begin{align}
   |g'(\hat{\gamma})|&<\frac{\log_2(1+\hat{\gamma})}{\hat{\gamma}(\hat{\gamma}+2)\log_2e}\le\ln 2<1,
   \end{align}
   since $\log_2(1+\hat{\gamma})\le\hat{\gamma}(\hat{\gamma}+2)$ for $\hat{\gamma}\ge 0$. Therefore, and from Banach's fixed point theorem \cite{Agarwal.2001}, the (at least) linear convergence of a Fixed-point iteration algorithm is guaranteed provided any initial point $\hat{\gamma}^{(0)}$. In this particular case, we chose $\hat{\gamma}^{(0)}=\infty\rightarrow M^{(0)}=1$. In Section~\ref{results} we show that the proposed Algorithm 1 converges very fast for practical setups.
   \hfill 	\qedsymbol
   \vspace*{-4mm}
   \section{Evidence of unique optimum value $\epsilon_{_{\mathrm{eth}}}^*$}\label{App_D}  
   Let $\epsilon_{_{\mathrm{th}}}=x\in[0,\ 1]$, $\gamma=z(x)\in\mathcal{R}^+$, $\epsilon_{_{\mathrm{out}}}=s\circ z=s(z(x))\in[0,\ 1]$, where $s(z)=\mathcal{P}[B_i<\hat{P}_{s,i}nT_c]=\mathcal{P}[g_i<\frac{z \chi }{B_i(z)}]=\frac{1}{\Gamma(m)}\Gamma\big(m,\tfrac{m\chi z}{B_i(z)}\big)$ with $\chi=\kappa d^{\alpha}\sigma_d^2nT_c$, and $\varepsilon=q(x)=(1-s(z(x)))x+s(z(x))=x+(1-x)s(z(x))\in[0,\ 1]$ according to \eqref{outage}. We know that $\epsilon_{_{\mathrm{out}}}$ is an increasing function, $s$, of $z$; however, $z$ is decreasing on $x$, thus $s$ is decreasing on $x$ as well. Also, $s(z(0))=1$ and $s(z(1))=0$, while $q(0)=q(1)=1$. Notice that when $\epsilon_{_{\mathrm{th}}}=1$, the source $S$ does not transmit and all it does is saving energy. In that case, the overall error probability is the worst possible. When $\epsilon_{_{\mathrm{th}}}$ decreases, $S$ is required to transmit with more and more power in order to fulfill the requirement. However, when $\epsilon_{_{\mathrm{th}}}=0$ the required power is practically impossible to reach and all that $S$ can do is to save energy, similarly to the case when $\epsilon_{_{\mathrm{th}}}=1$, and again the error performance is the worst possible. Evidently there is an inflexion point between $\epsilon_{_{\mathrm{th}}}=0$ and $\epsilon_{_{\mathrm{th}}}=1$, and now we aim at showing the singularity of this point.  
   
   Since every linear function is both convex and concave, we can say that  
   $x$ and $1-x$ are both convex functions. If $s(z(x))$ is convex then we could say that $q(x)$ is also convex on $x\in[0,\ 1]$ and therefore the unique minimum would be guaranteed. This is because $1-x$ and $s(z(x))$ are both convex decreasing functions, thus their product is convex \cite{Boyd.2004}, and the non-negative sum of convex functions, e.g., $(1-x)s(z(x))$ and $x$, is also convex \cite{Boyd.2004}. Let's now take a look at the second derivative of $s(z(x))$:
   \begin{align}\label{der2}
   s'(z(x))&=s'(z)z'(x)\nonumber\\
   s''(z(x))&=s''(z)z'(x)^2+s'(z)z''(x),
   \end{align}
   where $z'(x)^2> 0$, and $s'(z)>0$ since $s$ is an increasing function of $z$. We could even prove that $z''(x)>0$ for $x<0.5$, e.g., $z(x)$ is convex on the region of interest, from some analysis based on \eqref{gam} and the fact that $Q^{-1}(x)$ is convex on that region. However, and based on many and different setup simulations, since there is not an analytical expression of $s(z)$, we come to the conclusion that $s''(z)$ is not greater than $0$ in all the cases. Therefore, $s(z)$ is not always convex. Thus, it becomes intractable finding analytical arguments in order to prove that $s''(z(x))>0$ based on \eqref{der2}. In fact, our simulations show that  $s''(z(x))<0$ for few certain setups. However, even in those cases the overall error probability, $q(x)$, remains convex. Thus, the only path we can follow is by means of simulations, while exploring as many different setups as possible.
   \begin{figure}[t!]
   	\centering
   	\subfigure{\includegraphics[width=0.65\textwidth]{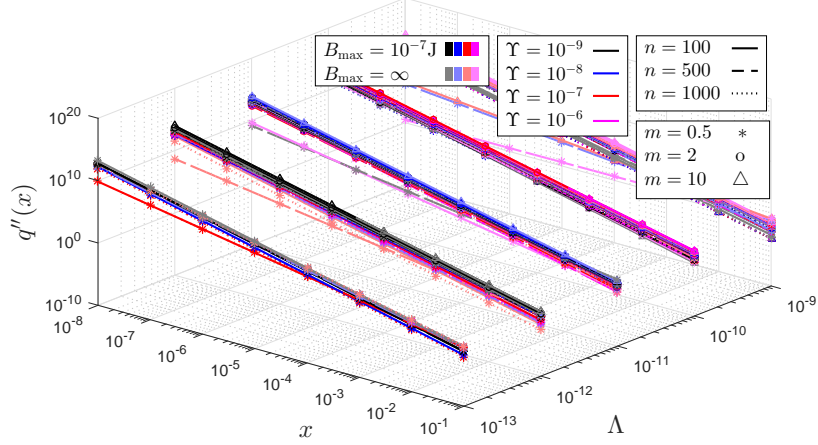}}
   	\vspace{-3mm}
   	\caption{$q''(x)$ as a function of $x$ for different values of system parameters $\Lambda$, $\Upsilon$, $B_{_{\mathrm{max}}}$, $n$, $m$, and $k=312$ bits.}	\label{Fig9}	
   	\vspace*{-5mm}
   \end{figure}
   
   Let $E_i=\Upsilon g_i\mathds{1}(g_i\ge\varpi^*)$ with $\Upsilon=\frac{\eta P_dvT_c}{\kappa d^{\alpha}}$, and $\hat{P}_{s,i}nT_c=\Lambda\frac{\hat{\gamma}n}{g_i}$ with $\Lambda=\kappa d^{\alpha}\sigma_d^2T_c$, thus, $\Upsilon$ and $\Lambda$ influence on the amount of energy being harvested and used for transmission at $S$, respectively. Also, a variation on $\eta$, $P_d$, $v$, $T_c$, $\kappa$, $d$, $\alpha$, $\sigma_d^2$, could be modeled through a variation on $\Upsilon$ and/or $\Lambda$. In Fig.~\ref{Fig9}, we show an estimated\footnote{The second derivative estimation was performed by collecting simulation data over $5\times 10^7$ channel realizations for $\epsilon_{_{\mathrm{eth}}}\in\{10^{-8},10^{-1}\}$ with spacing of $10^{-8}$ and applying numerical differentiation. Unfortunately, in order to acquire very accurate measurements, since derivative estimations are very sensitive,  we had to discard those parameter combinations leading to error probabilities below $5\times 10^{-5}$. 
   		%For instance, the curves for infinite battery in Fig.~ref{Fig9} do not appear when $\Lambda\in\{10^{-15}, 10^{-14}\}$.
   	} $q''(x)$ for the following system parameters: $k=\mathbf{312}$ bits,  $\Upsilon\in\{10^{-9},\mathbf{10^{-8}},10^{-7},10^{-6}\}$, $\Lambda\in\{10^{-13},10^{-12},\mathbf{10^{-11}},10^{-10},10^{-9}\}$, $n\in\{100,500,1000\}$ channel uses, $m\in\{0.5,\mathbf{2},10\}$, $B_{_{\mathrm{max}}}\in\{10^{-7},\infty\}$J. The values in bold are directed related with the simulation parameters in Section~\ref{results}, and notice that the impact of different message lengths, $k$, which conduces to different values of $\hat{\gamma}$, could be also modeled through variations on $\Lambda$. 
   	%
   	%All the valid curves are above $0$, and even when they appear shown as straight lines in the log-log scale, this is only because the over-scaled $z$ axis, e.g., there are non-linear log-scale variations which are hidden by the $z$ axis scale. However, the important thing is that
   	%
   	Notice that the convexity holds in every single case e.g., no curve presented a negative value of $q''(x)$, thus the minimum is unique in the region of interest $\epsilon_{_{\mathrm{eth}}}<0.1$.
   %, which account for 240 very representative setups
   %
   \hfill 	\qedsymbol
   
%   Let $\epsilon_{_{\mathrm{out}}}=s(\epsilon_{_{\mathrm{th}}})$ and $\varepsilon=q(\epsilon_{_{\mathrm{th}}})$, with $\epsilon_{_{\mathrm{th}}}, s(\epsilon_{_{\mathrm{th}}}), q(\epsilon_{_{\mathrm{th}}})\in[0,\ 1]$. We know that $\epsilon_{_{\mathrm{out}}}$ is a decreasing function, $s$, of $\epsilon_{_{\mathrm{th}}}$ with $s(0)=1$ and $s(1)=0$, while $q(0)=q(1)=1$ according to \eqref{outage}. Notice that when $\epsilon_{_{\mathrm{th}}}=1$, the source $S$ does not transmit and all it does is saving energy. In that case, the overall error probability is the worst possible. When $\epsilon_{_{\mathrm{th}}}$ starts decreasing, $S$ is required to transmit with more and more power in order to fulfill the requirement. However, when $\epsilon_{_{\mathrm{th}}}=0$ the required power is practically impossible and all that $S$ can do is saving energy, like when $\epsilon_{_{\mathrm{th}}}=1$, and again the error performance is the worst possible. Then, in some point between $\epsilon_{_{\mathrm{th}}}=0$ and $\epsilon_{_{\mathrm{th}}}=1$ an inflexion occurs. The singularity of this point is shown by simulations in Section~\ref{results}.	\hfill 	\qedsymbol

\vspace*{-4mm}
\section{Proof of Theorem~\ref{prop_3}}\label{App_E}
The average power consumption of $S$ can be computed as 
\begin{align}
\bar{P}&=(1-\epsilon_{_{\mathrm{out}}})\mathds{E}[\hat{P}_{s,i}]=(1-\epsilon_{_{\mathrm{out}}})\int_{0}^{\lambda}\hat{P}_{s,i}f_G(g)\mathrm{d}g\stackrel{(a)}{=}(1-\epsilon_{_{\mathrm{out}}})\gamma\kappa d^{\alpha}\sigma_d^2\int_{0}^{\lambda}\frac{m^m}{\Gamma(m)}g^{m-2}e^{-mg}\mathrm{d}g\nonumber\\
&\stackrel{(b)}{=}-(1-\epsilon_{_{\mathrm{out}}})\gamma\kappa d^{\alpha}\sigma_d^2m\frac{\Gamma(m-1,mg)}{\Gamma(m)}\bigg|_0^{\lambda},\label{PC3}
\end{align}
where $(a)$ comes from using \eqref{Psi}, and  $(b)$ from algebraic transformations of the incomplete gamma function definition \cite[eq.(8.2.1)]{Frank.2010} with $m>1$. We attain \eqref{PC1} straightforward from \eqref{PC3}. Now, substituting $\lambda=\infty$ into \eqref{PC3} yields
\begin{align}
\bar{P}_{_{\infty}}&=-(1-\epsilon_{_{\mathrm{out}}})\gamma\kappa d^{\alpha}\sigma_d^2m\frac{\Gamma(m-1,mg)}{\Gamma(m)}\bigg|_0^{\infty}=(1-\epsilon_{_{\mathrm{out}}})\gamma\kappa d^{\alpha}\sigma_d^2m\frac{\Gamma(m-1,0)}{\Gamma(m)},	
\end{align}
which is equal to \eqref{PC2}. \hfill 	\qedsymbol

\vspace*{-4mm}
\bibliographystyle{IEEEtran}
\bibliography{IEEEabrv,references}

\end{document}